\newtheorem*{rep@theorem}{\rep@title}
\newcommand{\newreptheorem}[2]{%
\newenvironment{rep#1}[1]{%
 \def\rep@title{#2 \ref{##1}}%
 \begin{rep@theorem}}%
 {\end{rep@theorem}}}
\DeclareMathOperator*{\E}{\mathbf{E}}
\let\Pr\relax
\DeclareMathOperator*{\Pr}{\mathbf{Pr}}
\newtheorem{theorem}{Theorem}
\newtheorem{corollary}{Corollary}
\newtheorem{define}{Definition}
\newtheorem{lemma}{Lemma}
\newtheorem{claim}{Claim}
\newcommand{\N}{\mathcal{N}}
\newcommand{\C}{\mathcal{C}}
\title{Approximately Stable Committee Selection}
\date{}
\newcommand*\samethanks[1][\value{footnote}]{\footnotemark[#1]}
\author{Zhihao Jiang\thanks{Institute for Interdisciplinary Information Sciences, Tsinghua University. Email: \texttt{jzh16@mails.tsinghua.edu.cn}.} \and Kamesh Munagala\thanks{Department of Computer Science, Duke University. Email: \texttt{\{kamesh,knwang\}@cs.duke.edu}.} \and Kangning Wang\samethanks[2]}
\begin{document}

\thispagestyle{empty}
\maketitle

\pagenumbering{gobble}
\begin{abstract}
In the committee selection problem, we are given $m$ candidates, and $n$ voters. Candidates can have different weights. A committee is a subset of candidates, and its weight is the sum of weights of its candidates. Each voter expresses an ordinal ranking over all possible committees. The only assumption we make on preferences is {\em monotonicity}: If $S \subseteq S'$ are two committees, then any voter weakly prefers $S'$ to $S$.

We study  a general notion of group fairness via stability: A committee of given total weight $K$ is {\em stable} if no coalition of voters can deviate and choose a committee of proportional weight, so that all these voters strictly prefer the new committee to the existing one. Extending this notion to approximation, for parameter $c \ge 1$, a committee $S$ of weight $K$ is said to be $c$-approximately stable if for any other committee $S'$ of weight $K'$, the fraction of voters that strictly prefer $S'$ to $S$ is strictly less than $\frac{c K'}{K}$. When $c = 1$, this condition is equivalent to classical {\em core stability}.

The question we ask is: Does a $c$-approximately stable committee of weight at most any given value $K$ always exist for constant $c$? It is relatively easy to show that there exist monotone preferences for which $c \ge 2$. However, even for simple and widely studied preference structures, a non-trivial upper bound on $c$ has been elusive.

In this paper, we show that $c = O(1)$ for {\em all} monotone preference structures. Our proof proceeds via showing an existence result for a randomized notion of stability, and iteratively rounding the resulting fractional solution.
\end{abstract}

\newpage
\pagenumbering{arabic}
\setcounter{page}{1}

\section{Introduction}
Fair allocation of resources is a widely studied problem in social choice literature. In several societal decision-making scenarios, the resources are {\em public goods}, meaning that they can be enjoyed by multiple agents simultaneously. For instance, consider the problem of locating a fixed number of parks or libraries to serve a population~\cite{ChenFLM19}. Each such resource provides shared utility to several members of society. Viewed as a facility location problem, a standard objective involves locating the facilities to minimize the total distance traveled by the population to its nearest open facility. However, such a solution need not be {\em fair}: In a city with a dense urban core and sprawling suburbs, it can lead to the algorithm placing many more facilities in the suburbs, causing the locations at the urban core to become overcrowded. In other words, the globally optimal solution may produce disparate outcomes for different demographic slices.

Similarly, consider the {\em participatory budgeting} problem~\cite{PBP,FairKnapsack,implicitPB,knapsack1,knapsack2,Fain2016}. Recently, many cities and wards across the world have a process to put part of their budget to vote. The city chooses several projects such as repaving streets, installing lights, \emph{etc}, and each voter indicates preferences over these projects. The goal of the city is to choose a subset of projects that is feasible within the budget to fund. Again, simple schemes to aggregate voter preferences may overly bias the outcome towards majority preferences, and may ignore entirely the preferences of a sizable, coherent, minority.

\subsection{Committee Selection and Fairness Model}
In this paper, we consider an abstract resource allocation model -- {\em committee selection} -- that captures not only the above two settings, but also several other problems studied in social choice and in network design. We use the term ``committee selection'' based on similar terminology in social choice literature; however, as discussed below, our model captures general combinatorial selection problems.  In these settings, we study a classical notion of group fairness, and show that solutions that approximately satisfy this notion always exist.

\paragraph{Committee Selection.}  Using social choice parlance, the committee selection problem models the common scenario of determining a winning subset, \emph{i.e.} a \emph{committee}, from a set of candidates. A set of voters (or agents) $\N = [n] = \{1, \ldots, n\}$ and a set of candidates $\C = [m]$ are given, and each candidate $i$ is associated with a weight $s_i \ge 0$. Let $w(S) = \sum_{i \in S} s_i$ denote the weight of the committee $S \subseteq 2^{\C}$. The goal is to find a committee $S$ of weight at most a given value $K$. We note that our results extend to the setting where the weight $w(S)$ is a subadditive function of $S$; we present the additive model for simplicity of exposition.

Each voter $v \in \N$ explicitly or implicitly specifies an ordinal ranking $\succeq_v$ over all possible committees (that is, all possible subsets of $\C$). Given voter $v \in \N$ and two committees $S_1$ and $S_2$, we use the notation $S_2 \succ_v S_1$ to indicate the voter strictly prefers $S_2$ to $S_1$. We indicate weak preference by $S_2 \succeq_v S_1$. We assume preferences are {\em complete}, so that for every voter $v$ and every two committees $S_1$ and $S_2$, the voter either (weakly or strongly) prefers $S_1$ to $S_2$, or \emph{vice versa}; and the preferences are {\em transitive}, so that for every voter $v$, if she prefers committee $S_1$ to $S_2$ and prefers $S_2$ to $S_3$, then she prefers $S_1$ to $S_3$, where the preference is strict if at least one of the two preferences is strict. The only additional condition we impose on the preferences is the following:
\begin{quote}
{\bf Monotonicity}: If $S_1 \subseteq S_2$, then $S_2 \succeq_v S_1$ for all $v \in \N$.
\end{quote}

In Section~\ref{sec:special}, we will show that the facility location and participatory budgeting settings described above, as well as several other problems in social choice and network design, are special cases of committee selection.

\paragraph{Fairness via Stability.} The notion of {\em group fairness} or {\em proportionality} is a central objective in committee selection. The general idea appeared in literature more than a century ago~\cite{Droop} and various incarnations of this notion have gained significant attention recently~\cite{Chamberlain,Monroe,Brams2007,Brill,Sanchez,Fain2016,FainMS18,PJR2018}. Here, each group of voters should feel that their preferences are sufficiently respected, so that they are not incentivized to deviate and choose an alternative committee of proportionally smaller weight. In the common scenario that we do not know beforehand the exact nature of the demographic coalitions, we adopt the robust solution concept which requires the committee to be  {\em agnostic} to any potential subset of voters deviating.

Formally, we study fairness via the notion of {\em core stability} from economics literature~\cite{lindahlCore,scarfCore,coreConjectureCounter,LindahlPaper,SamuelsonPaper}. This uses the notion of {\em pairwise score} defined below.
\begin{define} [Pairwise Score]
\rm
Given two committees $S_1, S_2 \subseteq \C$, the \emph{pairwise score} of $S_2$ over $S_1$ is the number of voters who strictly prefer $S_2$ to $S_1$: $V(S_1,S_2) := | \{ v \in \N \ | \ S_2 \succ_v S_1\}|$.
\end{define}

Given the above definition, we are now ready to define fairness via core stability as follows.
\begin{define}[Stable Committees (or the Core)]
\label{def:stable}
\rm
Given a committee $S \subseteq \C$ of weight at most $K$, the weight limit, a committee $S' \subseteq \C$ of weight $K'$ blocks $S$ iff $V(S,S') \ge \frac{K'}{K} \cdot n$. A committee $S$ is \emph{stable} (or lies in the {\em core}) if no committee $S'$ blocks it.
\end{define}

In other words, for any $\beta \in (0,1]$ and any $\beta n$ voters, there should not be another committee of weight at most $\beta K$, so that all these $\beta n$ voters are strictly better off with the new committee. It is easy to check that a stable outcome is (weakly) Pareto-optimal among committees of weight at most $K$, by considering the deviating coalition of all the voters. Furthermore, for every coalition of voters, a stable committee is also Pareto-optimal relative to committees with proportionally scaled-down weight.

In economics, this notion of core stability can be justified with {\em fair taxation}~\cite{lindahlCore,LindahlPaper,SamuelsonPaper}: Each voter has an endowment of $\frac{K}{n}$, so the society together has a budget of $K$. Candidate $i$ costs $s_i$, and we select a committee whose weight is at most $K$. If no subset of voters (blocking committee) with size $\beta n$ can deviate and use their endowment ($\beta K$) to purchase an alternative committee, then the committee is said to be stable.

\subsection{Motivating Examples: Fair Combinatorial Selection}
\label{sec:special}
Our results hold for any monotone purely ordinal preference structure over committees. As such, it models a wide range of combinatorial selection problems that have a rich history in social choice, network design, and related domains.

\begin{description}
\item[{\sc Participatory Budgeting.}]  This models the civic budgeting application described above. 
Each candidate is a public project, and its weight $s_i$ equals its cost.
Voter $v$ has utility $u_{iv}$ for project $i \in \C$. The value $K$ is the total budget available to the city.  The utility of the voter for committee $S$ is $u_v(S) = \sum_{i \in S} u_{iv}$, and the voter prefers committees that provide her higher utility. This can be generalized to utility functions that capture  complements and substitutes.
\item[{\sc Approval Set.}] This is a special case of the setting described above that has been widely studied in multi-winner election literature. In this model~\cite{Brams2007,Brill,Sanchez,PJR2018}, we assume each $s_i = 1$. Each voter $v$ specifies an {\em approval set} $A_v \subseteq \C$ of candidates. Given two committees $S_1$ and $S_2$, $S_1 \succ_v S_2$ iff $|S_1 \cap A_v| > |S_2 \cap A_v|$, {\em i.e.}, the voter prefers committees in which she has more approved candidates.
\item[{\sc Ranking.}] In this model~\cite{ElkindFSS17}, each candidate has unit weight. Each voter $v$ has a preference ordering over candidates in $\C$. In this case, $S_1 \succ_v S_2$ iff $v$'s favorite candidate in $S_1$ is ranked higher (in her preference ordering) than her favorite candidate in $S_2$. 
\item[{\sc Facility Location.}] This is a special case of {\sc ranking} that is motivated by the problem of locating public facilities described above, and was recently considered by~\cite{ChenFLM19}. Here, the preferences over candidates are dictated by distances in an underlying metric space. Formally, there is a metric space $d$ over $\C \cup \N$. Each location in $\C$ is a potential facility. Given a subset $S \subseteq \C$ of $K$ locations, the cost of voter $v$ is the distance to the closest facility in $S$. A voter prefers $S_2$ to $S_1$ if it incurs smaller cost in the former than in the latter. 
\item[{\sc Network Design.}] The committee selection problem also models selection problems in network design and combinatorial optimization. For instance, consider the {\sc non-uniform  buy-at-bulk} network design problem~\cite{MeyersonMP00,ChekuriHKS06}. We are given a multigraph $G(V,E)$ with a sink node $s$. Each edge $e \in E$ as cost $c_e$ and length $\ell_e$.  The length is typically decreasing with cost, since a more expensive edge would model a faster road or higher capacity network cable, which would reduce time it takes to traverse that edge. Between any pair $u,v$ of vertices, we assume there is an edge $e = (u,v)$ of cost $c_e = 0$; let $M$ denote the subset of edges with cost $0$. A committee is a subset $S$ of edges with non-zero cost that are provisioned, and its weight is the total cost of its edges. Given committees $S$ and $S'$, an end-node $v$ prefers $S$ to $S'$ if the length of the shortest path (according to the edge lengths $\ell$) from $v$ to $s$ in the subgraph $G(V, S \cup M)$ is smaller in the subgraph $G(V, S' \cup M)$. 
\end{description}

\subsection{Approximate Stability and Main Result}
As shown in~\cite{ChengJMW19,FainMS18}, there are simple instances with cyclic preferences where stable committees may not exist; we present such an instance in Appendix~\ref{app:lb}. 
This motivates us to consider an approximate notion of stability.

\begin{define}[$c$-Approximately Stable Committees]
\label{def:approx-stable}
\rm
Given a parameter $c \ge 1$, and a committee $S \subseteq \C$ of weight at most $K$, the weight limit, we say that a committee $S' \subseteq \C$ of weight $K'$ $c$-blocks $S$ iff $V(S,S') \ge c \cdot \frac{K'}{K} \cdot n$. A committee $S$ is $c$-\emph{approximately stable} if there are no committees $S'$ that $c$-block it.\footnote{Any $c$-approximately stable committee can trivially be modified to become Pareto-optimal while preserving the value of $c$. To see this, we simply find another committee that Pareto-dominates this committee. Therefore, Pareto-optimality comes for free in our setting.}
\end{define}

In the taxation interpretation, we scale down the endowment of each voter by the approximation factor $c$, so that she has an endowment of $\frac{1}{c} \cdot \frac{K}{n}$. If a subset of voters with size $\beta n$ deviates and uses their endowment to purchase an alternative committee, this committee has weight $\frac{\beta}{c} K$. Note that when $c = 1$, the solution is exactly stable. A larger $c$ ensures fewer coalitions deviate, and our goal is to find the minimum $c$ for which a $c$-approximately stable solution exists.  Theorem~\ref{thm:lb} (Appendix~\ref{app:lb}) shows that $c \ge 2 - \varepsilon$ for any constant $\varepsilon > 0$ even in the {\sc Ranking} setting.  

Our main result is the following general and somewhat surprising theorem that we prove in the main body of the paper.

\begin{theorem}
\label{thm:main}
For {\em any} monotone preference structure with $n$ voters and $m$ candidates, arbitrary weights and the cost-threshold $K$, a $32$-approximately stable committee of weight at most $K$ always exists.
\end{theorem}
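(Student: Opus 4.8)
The plan is to relax integral committees to \emph{lotteries} (distributions over committees of weight at most $K$) and to establish a randomized analogue of stability first, then to convert the lottery into a single committee while losing only a constant factor. Call a lottery $\mathcal{D}$ $c$-\emph{randomly stable} if for every committee $S'$ of weight $K'$ we have $\E_{S \sim \mathcal{D}}[V(S,S')] < c \cdot \frac{K'}{K}\, n$. The argument then splits into (i) proving that an $O(1)$-randomly-stable lottery exists, and (ii) rounding it to a single $O(1)$-approximately stable committee of weight at most $K$.

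For step (i), I would read randomized stability as the value of a two-player zero-sum game: the planner picks a committee $T$ with $w(T) \le K$, the adversary picks a deviation $S'$, and the payoff to the adversary is $u(T,S') = \frac{K}{n\, w(S')}\, V(T,S')$. A lottery is $c$-randomly stable precisely when it keeps the planner's expected payoff below $c$ against every $S'$. As there are finitely many committees, the minimax theorem applies, and it suffices to prove the dual statement: against any adversary distribution $\mathcal{Q}$ over deviations, some single committee $T$ of weight at most $K$ attains small average payoff. Here monotonicity is the crucial lever. If $T$ is built as the union of several committees sampled from $\mathcal{Q}$, then $T \succeq_v S'$ for every sampled $S'$, so a fresh draw $S' \sim \mathcal{Q}$ can satisfy $S' \succ_v T$ only when $S'$ is strictly better for $v$ than all of the samples; by exchangeability this happens with probability at most $\frac{1}{t+1}$ when $t$ samples are used, giving $\E[V(T,S')] \le \frac{n}{t+1}$.

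The subtlety in step (i) is the interaction between the single weight budget and deviations living at many different weight scales. I would group deviations geometrically by weight (scale $j$ meaning $w(S') \approx 2^{-j}K$) and split the budget $K$ across scales, taking more, cheaper samples at finer scales. A single scale is easy: $t$ samples of weight $\approx 2^{-j}K$ fit in budget and drive that scale's contribution to $O(1/t)$. But since a deviation of weight $2^{-j}K$ is magnified by a factor $\approx 2^{j}$ in the payoff, a priori every active scale threatens an $\Omega(1)$ contribution, and there may be unboundedly many active scales. The real content of (i) is thus to amortize across scales, exploiting that the single pooled committee $T$ weakly dominates all of its samples at once and that the voter mass able to strictly prefer fine-scale deviations over such a $T$ is limited, so that the per-scale contributions form a convergent geometric series summing to $O(1)$; rescaling the budget down by a constant then restores weight at most $K$.

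For step (ii), I would round the lottery to a committee iteratively, scale by scale, committing a geometric fraction of the budget at each weight level to a committee drawn from the relevant conditional of $\mathcal{D}$, and using monotonicity to argue that deviations at a given scale are blunted by what has already been committed at coarser scales. I expect this rounding to be the main obstacle. Unlike the lottery, a single committee must defeat \emph{every} deviation $S'$ simultaneously within one weight budget, so a naive sample from $\mathcal{D}$ fails against whichever $S'$ happens to beat it; the difficulty is to derandomize while keeping the total weight at most $K$ and the worst-case normalized blocking value $\max_{S'} \frac{K}{n\, w(S')} V(\hat S, S')$ within a constant factor across all scales at once. Tracking a potential that measures this worst-case value over all $S'$ and showing it degrades by only a bounded factor per rounding step is where the final constant (ultimately $32$) is paid.
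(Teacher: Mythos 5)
Your high-level architecture --- existence of an $O(1)$-stable lottery via minimax, then an iterative rounding to a single committee --- is exactly the right one, but both essential steps are left as assertions, and the specifics you commit to in step (i) run into a problem you name but do not solve. You set up the dual with the \emph{ratio} payoff $u(T,S') = \frac{K}{n\,w(S')}V(T,S')$ and answer an adversary distribution $\mathcal{Q}$ by taking unions of i.i.d.\ samples separately at each weight scale. The $1/w(S')$ magnification then makes each active scale threaten an $\Omega(1)$ contribution: with budget $K_j$ at scale $j$, the exchangeability bound only controls the probability of beating the samples \emph{at that scale}, giving a contribution of order $q_j K/K_j$, and $\sum_j q_j K/K_j$ is not $O(1)$ under $\sum_j K_j \le K$ when the adversary spreads mass over many scales. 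Your claimed fix ("the per-scale contributions form a convergent geometric series") is precisely the missing argument. The clean way out is to use the \emph{difference} payoff $V(T,S') - c\,\frac{w(S')}{K}\,n$ instead: its dual only requires $\E_{S'\sim\mathcal{Q}}[V(T,S')] < c\,\E_{S'\sim\mathcal{Q}}\!\left[\frac{w(S')}{K}\right] n$, i.e., beating the adversary's \emph{average} weight rather than every scale. One then builds the defender by dependent rounding that includes each support committee $S_i$ with probability $\min\left(1, \frac{\alpha_i}{2\beta}\right)$, where $\beta = \E[w(S')]/K$, and a per-voter telescoping sum yields a $2$-approximately stable lottery.

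In step (ii) you correctly identify derandomization as the main obstacle but supply no mechanism, and the one you hint at (rounding "scale by scale" from conditionals of $\mathcal{D}$) is not the one that works; the iteration should be over \emph{voters}, not weight scales. From a $2$-stable lottery $\Delta$ on the current voter set $\V'$ with budget $K'$, define for each voter $v$ the "good" deviations (those in the top $1-\beta$ quantile of $\Delta$ under $\succeq_v$) and the "bad" committees (bottom $\beta$ quantile). Stability of $\Delta$ bounds the number of voters for whom any fixed $S_a$ is good by $\frac{2}{\beta}\cdot\frac{w(S_a)}{K'}\,|\V'|$, while an averaging argument produces a single support committee that is non-bad for a $(1-\beta)$ fraction of voters; monotonicity converts "non-bad committee beaten by $S_a$" into "$S_a$ is good," which is the rare event. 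One removes those covered voters, shrinks the budget geometrically, recomputes a fresh lottery for the survivors, and takes the union of all selected committees; the constant $32$ comes from summing the two geometric series (voter counts and budgets) with $\alpha=\frac12$, $\beta=\frac14$. Without this quantile dichotomy and voter-elimination recursion --- or an explicit potential argument in its place --- your step (ii) restates the goal rather than proving it.
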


It is worth noting that prior to our work, no non-trivial result was known for the existence of approximately stable committees even in the very special cases of {\sc Approval Set} and {\sc Facility Location} preferences described above.   



\subsection{Techniques and Other Results}
Our proof of Theorem~\ref{thm:main} proceeds by first constructing a lottery (or randomization) over committees of weight $K$ that is $2$-approximately stable, and iteratively rounding this solution. The first challenge is to define the appropriate notion of randomized stability. As discussed in Section~\ref{sec:related}, though stability when committee members are chosen fractionally is a classical concept, these notions require convex and continuous preferences over fractional allocations, and it is not clear how to relate them to deterministic (or integer) solutions that we desire.

\paragraph{Stable Lotteries.} We proceed via a different randomized notion of stability that was first defined in~\cite{ChengJMW19}. We define this notion next.  Given a weight $K$, we let $\Delta$ denote a distribution (or lottery) over committees of weight at most $K$.

\begin{define}[Stable Lotteries~\cite{ChengJMW19}]
\label{def:lottery}
\rm
A distribution (or lottery) $\Delta$ over committees of weight at most $K$ is said to be $c$-\emph{approximately stable} iff for all committees $S' \subseteq \C$ of weight $K'$, we have: $\E_{S \sim \Delta} \left[ V(S,S') \right] < c \cdot \frac{K'}{K} \cdot n$.
\end{define}

In~\cite{ChengJMW19}, it was shown that an exactly stable lottery under this definition exists for {\sc Approval Set} and {\sc Ranking} settings, via solving the dual formulation. However, it was not clear either how to extend this technique even to {\sc Participatory Budgeting} preferences, or what a stable lottery implied about deterministic stable committees that is our main focus here. In this paper, we resolve both these questions. As our first contribution, in Section~\ref{sec:randomized}, we show the following.

\begin{theorem}
\label{thm:random}
 For any weight $K$ and  {\em all} monotone preferences, a $2$-approximately stable lottery over committees of weight at most $K$ always exists.
\end{theorem}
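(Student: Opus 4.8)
The plan is to cast the existence of a stable lottery as the value of a two-player zero-sum game and invoke LP duality. Consider the game in which we (the committee designer) choose a lottery $\Delta$ supported on committees of weight at most $K$, the adversary chooses a deviating committee $S'$ of weight $w(S')$, and the payoff to the adversary is $\E_{S\sim\Delta}[V(S,S')] - \frac{2n}{K}\,w(S')$. Since the candidate set is finite, both players have finitely many pure strategies and the payoff is bilinear (the budget term is linear in the adversary's mixed strategy and constant in ours), so von Neumann's minimax theorem applies. Hence it suffices to show that for every distribution $\mu$ over deviating committees there is a single committee $S^\ast$ (equivalently, a lottery) of weight at most $K$ with
\[ \E_{S'\sim\mu}[V(S^\ast,S')] \;\le\; 2\,\frac{n}{K}\,\E_{S'\sim\mu}[w(S')]. \]
Writing $W=\E_{S'\sim\mu}[w(S')]$, the target is that on average over $\mu$ at most a $2W/K$ fraction of voters strictly prefer a $\mu$-sample to $S^\ast$. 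The strict inequality in Definition~\ref{def:lottery} is then recovered by proving the bound with a constant slightly below $2$, or by an $\varepsilon$-perturbation together with a compactness argument over the finitely many committees.

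The committee $S^\ast$ will be built from $\mu$ itself, using two properties. First, \emph{monotonicity}: if $S^\ast$ is a union of several committees, then every voter weakly prefers $S^\ast$ to each of them, so a fresh committee $S'$ can be strictly preferred to $S^\ast$ only if it is strictly preferred to \emph{all} of the unioned pieces. Second, \emph{exchangeability}: if we draw $t+1$ committees i.i.d.\ from $\mu$, then for each voter at most one of them can be her unique strict favorite, so by symmetry the probability that a designated one is strictly preferred to the other $t$ is at most $\frac{1}{t+1}$. Combining these, if $S^\ast$ is the union of $t$ i.i.d.\ samples from $\mu$ and $S'$ is an independent fresh sample, then each voter strictly prefers $S'$ to $S^\ast$ with probability at most $\frac{1}{t+1}$, so $\E[V(S^\ast,S')]\le \frac{n}{t+1}$. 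Taking $t\approx K/W$ makes this about $\frac{nW}{K}$, which is even better than the factor-$2$ target.

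The difficulty --- and the source of the factor $2$ --- is the hard budget: the union of $t\approx K/W$ i.i.d.\ samples has expected weight about $K$ but may exceed $K$, whereas the lottery must be supported on committees of weight at most $K$. Truncating the union is not an option, since by monotonicity discarding candidates only increases the number of voters who prefer $S'$. The plan is instead to (i) split $\mu$ into \emph{heavy} deviations ($w(S')>K/2$) and \emph{light} ones: heavy deviations can simply be conceded, because $\Pr_\mu[\text{heavy}]\le \frac{2}{K}\,\E_\mu[w(S')\,\mathbf{1}_{\text{heavy}}]$, so even if every voter prefers every heavy sample their total contribution stays within the corresponding share of the $2nW/K$ budget; and (ii) for the light part, grow the union one sample at a time and \emph{stop} just before the budget $K$ is exhausted, which keeps $w(S^\ast)\le K$ by construction while still including order $K/W$ samples. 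The hard part will be analyzing this budget-stopped union against an independent fresh sample: the stopping time is correlated with the samples, so the clean exchangeability bound no longer applies verbatim, and one must argue uniformly over all voters and all adversarial preference structures that the loss from enforcing the budget is only a constant factor. This delicate step, reconciling the randomized union with the deterministic budget constraint, is where I expect the main technical work to lie and where the final constant $2$ is incurred.
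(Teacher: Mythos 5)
Your overall architecture matches the paper's: reduce via the minimax theorem to defeating an arbitrary attacking distribution $\Delta_a$, discard attacking committees of weight more than $K/2$ (your ``heavy'' case is the same observation the paper makes), take the defending committee to be a \emph{union} of attacker committees so that monotonicity forces a fresh sample to beat every included piece, and prove a per-voter probability bound of $2\,\E_{S_a\sim\Delta_a}[w(S_a)]/K$ that is then summed over voters. The gap is precisely the step you flag as ``the main technical work'': how to include roughly $K/\E[w(S_a)]$ committees in the union while respecting the hard weight cap $K$. Your proposal --- grow a union of i.i.d.\ samples and stop before the budget is exhausted --- does not go through as stated. Conditioning on the stopping time destroys the exchangeability your argument relies on (the included samples are biased toward light committees, and a voter's preferences may be adversarially correlated with weight), and the only deterministic lower bound on the number of included samples is $2$ (each light committee can weigh up to $K/2$), so the clean symmetry bound degrades to $1/3$ per voter rather than the required $O(\E[w(S_a)]/K)$.

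The paper's resolution is a different, and genuinely needed, idea: rather than i.i.d.\ sampling with a stopping rule, it rescales the marginals. Writing $\beta = \E[w(S_a)]/K$, it includes committee $S_i$ with marginal probability $p_i = \min\left(1, \frac{\alpha_i}{2\beta}\right)$, so that $\sum_i p_i w(S_i) \le K/2$, and realizes these marginals by a dependent-rounding scheme that (a) deterministically keeps $\sum_i X_i w(S_i) \le K/2$ with at most one fractional variable (which is then rounded up, giving total weight at most $K$ since each piece weighs at most $K/2$), and (b) is negatively correlated, i.e.\ $\E\left[\prod_{i \in T}(1-X_i)\right] \le \prod_{i\in T}(1-p_i)$. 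Negative correlation is exactly the substitute for the independence in your exchangeability step: ordering the attacker's committees by the voter's preference, the probability that a fresh sample at rank $i$ beats the union is at most $\prod_{j\le i}(1-p_j)$, and $\sum_i \alpha_i \prod_{j\le i}(1-p_j) = 2\beta\sum_i p_i\prod_{j\le i}(1-p_j) < 2\beta$ by a first-success (stopping-process) argument. Without this device --- or some other way to decouple the hard weight constraint from the inclusion events --- your budget-stopped union leaves the theorem unproved.
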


The proof of the above theorem builds on the proof of the exactly stable lottery for {\sc Ranking} instances in~\cite{ChengJMW19}. The duality proof in~\cite{ChengJMW19} constructed a primal lottery by sequentially rounding candidates based on their marginal probability in the dual solution, while the current proof constructs the primal lottery by sequentially rounding the committees in the dual solution directly. This allows us to develop a simple proof for all monotone preference structures; however, unlike~\cite{ChengJMW19}, our lottery is only approximately stable.

Once we construct this lottery, our main contribution (Section~\ref{sec:det}) is rounding it to show Theorem~\ref{thm:main}.  The randomized stability condition implies the existence of a committee that satisfies a certain fraction of voters simultaneously, in the sense that it lies not too far down the preference ordering of these voters. We iteratively eliminate such voters and re-compute the lottery, with the non-trivial aspect being to ensure that this process preserves approximate stability.

\medskip
In Section~\ref{sec:exten}, we show that our results extend to the more general setting where $w(S)$ is a subadditive set function, and also to the setting where there are multiple weight constraints. We also discuss some settings in which an approximately stable committee can be efficiently computed.

\paragraph{Exactly Stable Lotteries.} When considering lotteries, we haven't been able to find an instance of a monotone preference structure where an exactly stable solution does not exist. The loss of factor of $2$ in Theorem~\ref{thm:random} seems to be an artifact of our analysis. Indeed, in Appendix~\ref{sec:three}, we show a different way of constructing the dual solution that leads to the following result. We conjecture that this results extends to all $K$, and we discuss this and other open questions in Section~\ref{sec:open}. 

\begin{theorem} [Proved in Appendix~\ref{sec:three}]
\label{thm:small}
For unit-weight candidates and any number of voters with arbitrary monotone preferences, when $K \in \{1, 2, 3\}$, an {\em exactly} stable lottery always exists.
\end{theorem}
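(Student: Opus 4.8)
The plan is to pass to the LP dual. Fix $K$ and consider the finite zero-sum game in which the primal player chooses a lottery $\Delta$ over committees of weight at most $K$ and the adversary chooses a deviating committee $S'$; since $\C$ is finite there are finitely many committees and the minimax theorem applies. A stable lottery (Definition~\ref{def:lottery} with $c=1$) exists exactly when the value of $\min_\Delta \max_{S'}\big(\E_{S\sim\Delta}[V(S,S')]-\tfrac{|S'|}{K}n\big)$ is non-positive, and pushing the inner maximum into a distribution $\mu$ over deviating committees, this is equivalent to the following dual rounding statement: for every distribution $\mu$ over committees there is a single committee $S$ with $|S|\le K$ such that $\E_{S'\sim\mu}[V(S,S')]\le \tfrac{n}{K}\,\E_{S'\sim\mu}[|S'|]$. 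I would prove Theorem~\ref{thm:small} by establishing this statement for $K\in\{1,2,3\}$.

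The first reduction is that large deviating committees are free. Write $L=\E_\mu[|S'|]$. For any $S$, the committees with $|S'|>K$ contribute at most $n\Pr_\mu[|S'|>K]\le \tfrac{n}{K+1}\,\E_\mu[|S'|\cdot \mathbbm 1(|S'|>K)]<\tfrac nK\,\E_\mu[|S'|\cdot\mathbbm 1(|S'|>K)]$ to the left-hand side, already within their share of the budget. Hence it suffices to find $S$ handling the part of $\mu$ supported on committees of weight at most $K$; equivalently, I may assume $\mu$ is supported on committees of size at most $K$.

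The structural tool is the monotonicity lemma: if $S'\succ_v S$ then $S'\succ_v\{j\}$ for every $j\in S$ (since $S\succeq_v\{j\}$ and transitivity). For $K=1$ this already finishes: letting $m_i=\Pr_\mu[i\in S']$ and $\nu(i)=m_i/L$, drawing one candidate $j\sim\nu$ gives $\E_j\E_\mu[V(\{j\},S')]=\tfrac1L\sum_v\sum_j m_j\Pr_\mu[S'\succ_v\{j\}]\le \tfrac1L\sum_v (1-\mu(\emptyset))L = n(1-\mu(\emptyset))\le nL$, using $\Pr_\mu[S'\succ_v\{j\}]\le\Pr_\mu[S'\ne\emptyset]$ and $L\ge 1-\mu(\emptyset)$, so some singleton meets the bound. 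For $K\ge2$ the natural attempt is to draw $S=\{j_1,\dots,j_K\}$ with $j_t\sim\nu$ independently; the monotonicity lemma and independence give $\E_S\,\mathbbm 1(S'\succ_v S)\le q_v(S')^K$, where $q_v(S')=\Pr_{j\sim\nu}[S'\succ_v\{j\}]$, so the whole quantity is at most $\sum_v\E_{S'\sim\mu}[q_v(S')^K]$, and the goal becomes to bound this by $\tfrac nK L$. On singletons this is exactly an order-statistic computation $\E[U^K]=\tfrac1{K+1}$ giving $\tfrac n{K+1}<\tfrac nK$, so the bound carries genuine slack there.

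The main obstacle is closing the $K\ge2$ accounting, and this is where the restriction $K\le3$ enters. The difficulty is that $q_v$ is defined through $\nu$, which is coupled to all of $\mu$, so one cannot simply verify the inequality committee-by-committee; moreover pure i.i.d.\ sampling can fail—if $\mu$ places mass on a committee that every voter ranks at the very top, any $S$ omitting its candidates incurs $V(S,S')=n$, so the rounding must sometimes be forced to include particular candidates. I therefore expect the real argument to be an adaptive, sequential rounding of the committees in the support of $\mu$ (in the spirit of the construction behind Theorem~\ref{thm:random}), together with a careful convexity/order-statistic accounting of $\sum_v \E_{S'}[q_v(S')^K]$ against $\tfrac nK\E_{S'}[|S'|]$, broken down by $|S'|\in\{1,\dots,K\}$; this balance works out for $K\le 3$ but not evidently beyond, matching the conjectural status for larger $K$. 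Finally, the strict inequality in Definition~\ref{def:lottery} comes essentially for free: deviations with $|S'|>K$ satisfy it with room to spare by the first reduction, and the order-statistic slack $\tfrac1{K+1}<\tfrac1K$ (or a vanishing perturbation of $\mu$) upgrades the remaining $\le$ to~$<$.
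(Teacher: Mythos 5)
Your setup is the paper's: pass to the dual zero-sum game, and for every attacking distribution exhibit a single (or randomized) defending committee by rounding the attacker's own distribution, with an order-statistic bound supplying the slack. The $\tfrac{1}{t+1}<\tfrac{K'}{K}$ observation you make on singletons is exactly the engine of the paper's Lemma~\ref{lem:same_size}. But the substantive content of the theorem is the $K=2$ and $K=3$ cases, and there your proposal has a genuine gap: you explicitly defer the construction (``I expect the real argument to be an adaptive, sequential rounding\dots'') rather than giving it, and the rounding you do propose --- drawing $K$ \emph{candidates} i.i.d.\ from the marginal $\nu$ --- is not the one that works; you yourself point out a scenario where it fails, and you do not resolve it. Two concrete ideas are missing. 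First, attacking committees of weight exactly $K$ can be discarded (they are handled by Lemma~\ref{lem:same_size} with $t=1$, i.e.\ by playing a draw from $\Delta_a$ itself, giving $\Pr[S_a\succ_v S_d]\le\tfrac12<1$); with that reduction $K=2$ is immediate, since the attacker is left with singletons only and the union of two i.i.d.\ draws from $\Delta_a$ gives $\tfrac13<\tfrac12$. Second, for $K=3$ the paper splits $\Delta_a$ into its size-$1$ and size-$2$ conditionals $\Delta_1,\Delta_2$ with $p=\Pr[|S_a|=1]$, and defends with a specific mixture: with probability $p^2$ the union of two independent draws from $\Delta_1$, otherwise the union of one draw from $\Delta_1$ and one from $\Delta_2$. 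The analysis then parametrizes each voter's ranking by quantile $x$, writes $f(x),g(x)$ for the masses of size-$1$ and size-$2$ committees ranked below position $x$, maximizes the resulting quadratic in $f$ pointwise, and obtains $\Pr[S_a\succ_v S_d]\le\tfrac12-\tfrac p6<\tfrac{2-p}{3}$, which is the per-voter budget. None of this mixture construction or the quadratic accounting appears in your proposal, so as written it establishes only $K=1$.

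Note also that the key reason the paper samples whole committees from the attacker's distribution (rather than candidates from the marginal, as you do) is that monotonicity then directly yields ``$S_a$ beats the union only if $S_a$ beats every piece,'' which is what makes the order-statistic bound apply without any coupling between $\nu$ and $\mu$; this is the same mechanism as in the proof of Theorem~\ref{thm:random}, and adopting it would have removed the obstacle you identify.
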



\subsection{Related Work}
\label{sec:related}
Committee selection is omnipresent in political and economic activities of a society: We see it in parliamentary elections, in group-hiring processes, and in participatory budgeting. Recent work in social choice~\cite{Procaccia2008,Meir2008,Lu2011,Brill,Sanchez,PJR2018} has extensively studied the properties of committee selection rules and established axiomatization in this field. Furthermore, group fairness in committee selection arises in areas outside social choice: In a shared-cache system with multiple users, consider the problem of deciding which parts of the data to keep in the cache that has only limited storage~\cite{ROBUS,Psomas}. Users gain utility from their data being cached. We can model this as committee selection where each atomic piece of data corresponds to a candidate. In this context, a fair caching policy provides proportional speedup to each user.

\medskip
We now compare our stability notions with some closely related notions in literature. This will place our technical work in context.

\paragraph{The Lindahl Equilibrium.}
The notion of stability is the same as that of the \emph{core} in cooperative game theory. Scarf~\cite{scarfCore} first phrased it in game-theoretic terms, and it has been extensively studied in public-good settings~\cite{LindahlPaper,SamuelsonPaper,lindahlCore,coreConjectureCounter,Fain2016}. Much of this literature considers convex and continuous preferences, which in our setting implies convex preferences over fractional allocations (that is, when candidates can be chosen fractionally). The seminal work of Foley~\cite{lindahlCore} considers the Lindahl market equilibrium. In this equilibrium, each candidate is assigned a per-voter price. If the voters choose their utility maximizing allocation subject to spending a dollar, then (1) they all choose the same fractional outcome; and (2) for each chosen candidate, the total money collected pays for that candidate. It is shown via a fixed point argument that such an equilibrium pricing always exists when fractional allocations are allowed, and this outcome lies in the core. Though this existence result is very general, it needs the preferences over fractional outcomes to be convex and continuous.

For instance, in the case of {\sc facility location}, a fractional allocation $\vec{y}$ satisfies $\sum_{i \in \C} y_i \leq K$ and $y \in [0, 1]$. One possible convex disutility of voter $v$ for $\vec{y}$ is
\[
C_v(\vec{y}) = \min \left\{ \sum_i d_{iv} x_{iv} \ \middle| \ \sum_i x_{iv} = 1; \ x_{iv} \le y_i \ \forall i \right\}.
\]
Though Foley's result shows there exists an allocation $\vec{y}$ that is a core outcome, it is not clear (a) how to compute this fractional solution efficiently; and (b) more importantly, how to round this allocation to an approximately stable integer solution. The difficulty in rounding is because we cannot relax the distances when considering when a voter can deviate; indeed, if we could relax distances, the problem becomes very different, and  there is an approximately stable solution (that only relaxes distances and not the size of the deviating coalition) via a simple greedy algorithm~\cite{ChenFLM19}.

This motivates using the new notion of randomized stability, where a deterministic outcome is first drawn from the lottery, and subsequently the voters who see higher utility deviate. This notion does not correspond to underlying convex preferences over the space of lotteries; however, as we show, a stable lottery can now be converted to an approximately stable committee. Furthermore, for {\sc facility location} and more generally, {\sc Ranking},  this stable lottery can be efficiently computed~\cite{ChengJMW19}, while we do not know how to compute the Lindahl equilibrium efficiently.

\paragraph{Nash Welfare and its Variants.} There is extensive work (see~\cite{ElkindFSS17}) on voting rules where we construct a score $\sigma_v(S)$ for each voter and committee, and choose the committee that maximizes $\sum_v \sigma_v(S)$. For instance,  for {\sc Approval Set} preferences, the classic Proportional Approval Voting (PAV) method that dates back more than a century~\cite{Thiele}, assigns $\sigma_v(S) \approx  \log(1 + |A_v \cap S|)$.  More generally, the {\em Nash Welfare} objective~\cite{Portioning,Fain2016,Thiele,PJR2018,FairKnapsack,FainMS18} assigns score $\sigma_v(S) = \log(u_v(S))$, where $u_v(S)$ is the utility of the voter for committee $S$. These methods compute a stable solution when the utility of voters in a deviating coalition is scaled down. This requires either knowing or  imputing cardinal utility functions for voters (and does not work with disutilities), and is otherwise incomparable to the more widely studied and classical notion of core stability that we consider. where the committee size on deviation is scaled down and the utilities of voters are unchanged.  Further, for {\sc Approval set} preferences, the PAV method is no better than an $\Omega(\sqrt{K})$-approximation to a stable outcome~\cite{ChengJMW19}.

A recent line of work~\cite{Sanchez,PJR2018,Brill} has considered a special case of stability with {\sc Approval set} preferences, when the coalition that deviates is not arbitrary, but is cohesive in terms of preferences. They term this {\em Justified Representation}, with generalizations known both for {\sc Approval set} and to other preference structures~\cite{Aziz1,Aziz2}. For {\sc Approval Set}, it is shown that the PAV method and its variants achieve or closely approximate these notions of stability. However, as mentioned above, the PAV method do not approximate the core outcome, so that stability is very different in structure from Justified Representation and its variants. 

Finally, our approximation notion scales down the endowment of the deviating coalition by a factor of $c$. An alternative approach would have been to approximate the utility of the voters in the deviating coalition by a factor of $c$. In this model, constant approximations have been obtained for clustering~\cite{ChenFLM19} and for {\sc Approval Set}~\cite{Peters20}; the latter result uses the PAV method. However, these results require developing a different technique for each problem, while our approach has the advantage of being oblivious to the choice of cardinal utilities while leading to a unifying result for all preference structures.

\section{Existence of $2$-Approximately Stable Lotteries}
\label{sec:randomized}
\label{sec:random}
In this section, we consider choosing a stable lottery over committees of weight at most $K$. Recall the definition of stability in this setting from Definition~\ref{def:lottery}.

We take the approach in~\cite{ChengJMW19} and consider the dual formulation of selecting stable lotteries. The existence of a $c$-approximately stable lottery is equivalent to deciding:
\begin{equation}
\min_{\Delta} \max_{S'} \E_{S \sim \Delta}\left[V(S, S') - c \cdot \frac{w(S')}{K} \cdot n\right] < 0,\label{eq:primal_game}
\end{equation}
where $\Delta$ is a distribution (lottery) over committees of weight at most $K$. Viewing $\Delta$ as a mixed strategy over the ``defending'' committees and $S'$ as the ``attacking'' strategy, we treat (\ref{eq:primal_game}) as a zero-sum game. Duality (or the {\em min-max principle}) now allows us to swap the order of actions by allowing the attacker to use a mixed strategy. (\ref{eq:primal_game}) is thus equivalent to
\begin{equation}
\max_{\Delta_a} \min_{S_d : w(S_d) \leq K} \E_{S_a \sim \Delta_a}\left[V(S_d, S_a) - c \cdot \frac{w(S_a)}{K} \cdot n\right] < 0,\label{eq:dual_game}
\end{equation}
where $\Delta_a$ is a lottery over committees of weight at most $K$ chosen by the attacker. This dual view provides a convenient tool for showing the existence of approximately stable lotteries. The rest of the section is devoted to proving Theorem~\ref{thm:random}, that we restate here.

\begin{reptheorem}{thm:random}
For any value $K$ and  {\em all} monotone preferences, a $2$-approximately stable lottery over committees of weight at most $K$ always exists.
\end{reptheorem}

\subsection{Per-Voter Guarantee}
Assume we are given a lottery $\Delta_a$. If there is a committee $S_a$ in $\Delta_a$ with $w(S_a) > \frac{K}{2}$, then $2 \cdot \frac{w(S_a)}{K} \cdot n > n \ge V(S_d, S_a)$ for any $S_d$. This implies $V(S_d, S_a) - 2 \cdot \frac{w(S_a)}{K} \cdot n < 0$. Therefore, the attacker can remove these strategies from its lottery, and we can assume $\Delta_a$ only has committees with weight at most $\frac{K}{2}$.

Given any distribution $\Delta_a$ over committees of weight at most $\frac{K}{2}$, we need to show there is a defending committee $S_d$ with weight at most $K$, such that
$$
\E_{S_a \sim \Delta_a}\left[V(S_d, S_a) - 2 \cdot \frac{w(S_a)}{K} \cdot n\right] < 0.
$$

Suppose that the strategy $\Delta_a$ chooses $S_1$ with probability $\alpha_1$, committee $S_2$ with probability $\alpha_2$, \ldots, $S_t$ with $\alpha_t$, where $t = 2^{|\C|}$. Let
$$\beta = \E_{S_a \sim \Delta_a}\left[\frac{w(S_a)}{K}\right] = \frac{\sum_{i = 1}^t \alpha_i \cdot w(S_i)}{K}$$
be the ratio between the expected total weight of the attacking strategy and $K$, the allowable weight for the defending strategy.  We need to find an $S_d$ of weight at most $K$ so that:
\begin{equation}
 \E_{S_a \sim \Delta_a}\left[V(S_d, S_a) \right] < 2 \beta n. \label{eq:random_goal}
\end{equation}
We will construct a distribution $\Delta_d$ over committees $S_d$ that satisfies a stronger property:
\begin{equation}
 \Pr_{S_d \sim \Delta_d, S_a \sim \Delta_a}[S_a \succ_v S_d] < 2 \beta \qquad \forall v \in [n]. \label{eq:random_goal2}
 \end{equation}
Summing over all voters $v$ implies the existence of a lottery $\Delta_d$ satisfying (\ref{eq:random_goal}), and hence a deterministic committee $S_d$ satisfying the same. This will imply the theorem statement.

\subsection{Dependent Rounding}
Let $p_i = \min\left(1, \frac{\alpha_i}{2\beta}\right)$ for $i \in [t]$. We have:
\begin{itemize}
\item $p_i \in [0,1]$ for all $i \in [t]$; and
\item $\sum_{i \in [t]} p_i w(S_i) \le \frac{1}{2 \beta} \sum_{i \in [t]} \alpha_i w(S_i) = \frac{\beta K}{2 \beta} = \frac{K}{2}$.
\end{itemize}

We will construct the defending committee by including each attacker committee $S_i, i \in [t]$ with probability $p_i$; the details of which are below. We use the random variable $X_i$ to denote whether we include $S_i$ in our defending committee $S_d$, so that $S_d = \bigcup_{i \in [t] : X_i > 0} S_i$. We therefore obtain a distribution $\Delta_d$ over committees $S_d$.

We use the dependent rounding procedure in~\cite{byrka} to construct $\{X_i\}$ given the $\{p_i\}$. This has the following properties.
\begin{itemize}
\item (Almost-Integrality) For any realization of $X_i$'s, all but at most one of them takes value in $\{0, 1\}$ (the remaining one takes value in $[0, 1]$).
\item (Correct Marginals) $\E[X_i] = p_i$ for all $i \in [t]$.
\item (Preserved Weight) $\Pr\left[\sum_{i \in [t]} w(S_i) \cdot X_i \leq \frac{K}{2}\right] = 1$.
\item (Negative Correlation) $\forall T \subseteq [t]$, $\E\left[\prod_{i \in T} (1 - X_i)\right] \leq \prod_{i \in T}\E[1 - X_i] = \prod_{i \in T} (1 - p_i)$.
\end{itemize}

To have full integrality instead of the almost-integrality, in any realization, we include $S_i$ in our $S_d$ as long as $X_i > 0$ (instead of only fully including it when $X_i = 1$). Since we assumed $w(S_i) \le \frac{K}{2}$ for all $i \in [t]$, using the almost-integrality and preserved-weight conditions, for any realization of $\{X_i\}$, the weight of the resulting $S_d$ satisfies
\[
w(S_d) \le \sum_{i \in [t] : X_i = 1} w(S_i) + \sum_{i \in [t] : X_i \in (0, 1)} w(S_i) \le \frac{K}{2} + \frac{K}{2} = K.
\]

\subsection{Analysis}
Fix a voter $v$. W.l.o.g. assume her preference over the sets in $\Delta_a$ are $S_1 \succeq_v S_2 \succeq_v \cdots \succeq_v S_t$.
\begin{align*}
\Pr_{S_d \sim \Delta_d, S_a \sim \Delta_a}[S_a \succ_v S_d] &\leq \sum_{i \in [t]} \Pr_{S_a \sim \Delta_a}[S_a = S_i \text{ and } X_1 = X_2 = \cdots = X_{i} = 0]\\
& =  \sum_{i \in [t]} \alpha_i \cdot \Pr_{S_a \sim \Delta_a}[X_1 = X_2 = \cdots = X_{i} = 0]\\
&\leq \sum_{i \in [t]} \alpha_i \cdot \prod_{j = 1}^i (1-p_j)\\
&=  \sum_{i \in [t]} 2\beta \cdot p_i \cdot \prod_{j = 1}^i (1-p_j)  <  2\beta.
\end{align*}

Here, the first step follows because when the adversary chooses set $S_i$, it only beats $S_d$ if $S_d$ included none of $S_1, S_2, \ldots, S_i$. Here, we are using the monotonicity of the preference structure: Since $S_j \succeq_v S_i$ for $j \le i$, this implies $S_d \succeq_v S_i$  when $S_j \subseteq S_d$. The second step follows since the realization of the adversary's lottery is independent of that of the defender, and since $\alpha_i = \Pr[S_a = S_i]$. The third step follows by the negative correlation property of $\{X_j\}$. To see the fourth step, note that if $\alpha_i \ge 2 \beta$, then $p_i = 1$, so that $\alpha_i  \cdot \prod_{j = 1}^i (1-p_j) = 0 = 2\beta \cdot p_i \cdot \prod_{j = 1}^i (1-p_j)$. Otherwise, $\alpha_i = 2 \beta p_i$.

To see the final inequality, note that $\sum_{i \in [t]} p_i \prod_{j < i} (1-p_j)$ is the probability of the following stopping process picking some set: Pick $S_1$ with probability $p_1$; if not, pick $S_2$ with probability $p_2$, and so on. Therefore,  $\sum_{i \in [t]} p_i \prod_{j < i} (1-p_j) \le 1$. 
Thus $\sum_{i \in [t]} p_i \prod_{j \le i} (1-p_j) < 1$. This proves (\ref{eq:random_goal2}), and hence Theorem~\ref{thm:random}.

\newcommand{\V}{\mathcal{V}}
\newcommand{\G}{\mathcal{G}}
\newcommand{\B}{\mathcal{B}}
\newcommand{\W}{\mathcal{W}}
\renewcommand{\H}{\mathcal{H}}
\section{Existence of Approximately Stable Committees}
\label{sec:det}
In this section, we show that a $O(1)$-approximately stable committee always exists. We show this by iteratively rounding the lottery constructed in Section~\ref{sec:randomized}. We first restate Theorem~\ref{thm:main}.

\begin{reptheorem}{thm:main}
For {\em any} monotone preference structure over any number $n$ of voters, and $m$ of candidates with arbitrary weights, and any weight $K$, a $32$-approximately stable committee of weight at most $K$ always exists.
\end{reptheorem}

For the proof, fix a deviating committee $S_a$ of weight $w(S_a)$. Suppose our final committee is $T$ of weight at most $K$. Our goal is to show that:  $V(T,S_a) < 32 \cdot \frac{w(S_a)}{K} \cdot n.$

\subsection{Good and Bad Committees}
Throughout the proof, we fix two constants $0 < \beta \le \alpha < 1$, whose choice will be determined at the very end. To begin, we define a subroutine that returns a $2$-approximately stable lottery (via Theorem~\ref{thm:random}) for any subset of voters, and any committee size.

\begin{define}
\rm
Given candidate set $[m]$, 
voter set $\V'$, and 
committee size $K'$, let $\textsc{Lottery}(\V', K')$ return a lottery $\Delta$ over committees of weight at most $K'$ that is $2$-approximately stable for the set of voters $\V'$. Similarly, let
$ V_{\V'}(S,S_a) = \left| \left\{ v \in \V' \ \middle| \ S_a \succ_v S \right \} \right|.$

\end{define}

Let $x_i$ be the probability that $\Delta$ includes committee $S_i$.

\begin{define}
\rm
Given a voter $v$, we define the set of {\em good} and {\em bad} committees relative to $\Delta$, $\G_v(\Delta)$ and $\B_v(\Delta)$ respectively,  as follows:
\rm
\[
\G_v(\Delta) = \left\{ S \subseteq \C \ \middle|\  \sum_{S_i \succeq_v S} x_i \leq 1 - \beta \right\} \qquad \mbox{and} \qquad  \B_v(\Delta) = \left\{ S \subseteq \C \ \middle|\  \sum_{S_i \preceq_v S} x_i \leq \beta \right\}.
\]
\end{define}

The idea is that the good committees appear sufficiently high up in $v$'s ranking, while the bad committees are lower down in the ranking. The notion of high and low is relative to the probability mass $\Delta$. The following lemma is immediate.

\begin{claim}
\label{claim1}
If $S \notin \B_v(\Delta)$, then  $S_a \succ_v S$ only if $S_a \in \G_v(\Delta)$. 
\end{claim}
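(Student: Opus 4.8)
The plan is to unwind the two defining set-inequalities and connect them through the fact that the preference $\succeq_v$ is complete and transitive. Recall that $x_i$ is the probability that $\Delta$ picks committee $S_i$, so that $\sum_i x_i = 1$, the sum ranging over all committees. The hypothesis $S \notin \B_v(\Delta)$ is, by definition, the statement $\sum_{S_i \preceq_v S} x_i > \beta$, i.e.\ voter $v$ places more than a $\beta$-fraction of the mass of $\Delta$ weakly below $S$. The goal $S_a \in \G_v(\Delta)$ is the statement $\sum_{S_i \succeq_v S_a} x_i \le 1 - \beta$. So I want to show that, under the extra assumption $S_a \succ_v S$, the mass weakly above $S_a$ is at most $1 - \beta$.

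The first step is a partition argument using completeness. For voter $v$ and the fixed committee $S$, every committee $S_i$ is either strictly preferred to $S$ (i.e.\ $S_i \succ_v S$) or weakly worse than $S$ (i.e.\ $S_i \preceq_v S$), and these two events are complementary. Hence
\[
\sum_{S_i \succ_v S} x_i \;=\; 1 - \sum_{S_i \preceq_v S} x_i \;<\; 1 - \beta,
\]
where the strict inequality is exactly the hypothesis $S \notin \B_v(\Delta)$.

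The second step is a containment argument using transitivity. Under the assumption $S_a \succ_v S$, any committee $S_i$ with $S_i \succeq_v S_a$ satisfies $S_i \succ_v S$ by transitivity (one strict and one weak preference chain to a strict one). Therefore $\{\,S_i : S_i \succeq_v S_a\,\} \subseteq \{\,S_i : S_i \succ_v S\,\}$, and since all $x_i \ge 0$ this gives
\[
\sum_{S_i \succeq_v S_a} x_i \;\le\; \sum_{S_i \succ_v S} x_i \;<\; 1 - \beta,
\]
which is precisely $S_a \in \G_v(\Delta)$, completing the argument. There is no real obstacle here: the claim is essentially a bookkeeping identity, and the only points requiring care are (i) using completeness to justify that $\{\succ_v S\}$ and $\{\preceq_v S\}$ form a genuine partition of all committees (so the masses sum to $1$), and (ii) invoking transitivity to transfer ``weakly above $S_a$'' into ``strictly above $S$''. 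The slack $\beta$ is carried through unchanged, which is why the implication holds with the same parameter on both sides.
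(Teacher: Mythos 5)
Your proof is correct and follows essentially the same route as the paper's: translate $S \notin \B_v(\Delta)$ into $\sum_{S_i \succ_v S} x_i < 1-\beta$ via complementation, then use transitivity with $S_a \succ_v S$ to bound $\sum_{S_i \succeq_v S_a} x_i$ by that same quantity. You simply spell out the completeness and transitivity steps that the paper leaves implicit.
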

\begin{proof}
If $S \notin \B_v(\Delta)$, then $\sum_{S_i \preceq_v S} x_i >  \beta$. This implies $\sum_{S_i \succ_v S} x_i < 1 - \beta$. Since $S_a \succ_v S$, we have $\sum_{S_i \succeq_v S_a} x_i < 1 - \beta$, so that $S_a \in \G_v(\Delta)$.
\end{proof}

The next lemma implies that (a) Any committee $S_a$ cannot lie in too many good sets relative to its weight; and (b) There is some committee (with non-zero support in $\Delta$) that does not lie in more than a constant fraction of the bad sets. The previous claim rules out the possibility where too many voters prefer $S_a$ to such a committee, relative to the weight of $S_a$, which will be crucial for the algorithm we subsequently design.

\begin{lemma}
\label{lem:satisfied}
Given $\Delta = $ {\sc Lottery}$(\V',K')$, we have the following upper and lower bounds:
 \begin{enumerate}
 \item For all committees $S_a$, we have
 \[
 \left|\left\{v \in \V' \ \middle| \ S_a \in \G_v(\Delta)  \right\} \right| < \frac{2}{\beta} \cdot \frac{w(S_a)}{K'} \cdot |\V'|.
 \]
\item There exists  $S$ with non-zero support in $\Delta$ such that
\[
\left| \left\{v \in \V' \ \middle| \ S \notin \B_v(\Delta)  \right\} \right| \ge (1-\beta) \cdot |\V'|.
\]
\end{enumerate}
\end{lemma}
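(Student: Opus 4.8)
The two parts are essentially independent, and I would prove each by a short counting or averaging argument anchored on the $2$-approximate stability of $\Delta$ (Definition~\ref{def:lottery}), which for committee size $K'$ and voter set $\V'$ asserts $\E_{S \sim \Delta}[V_{\V'}(S, S_a)] < 2 \cdot \frac{w(S_a)}{K'} \cdot |\V'|$ for every committee $S_a$.

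For part (1), the plan is to translate membership in the good set into a lower bound on how often $S_a$ beats a random draw from $\Delta$. Unpacking the definition, $S_a \in \G_v(\Delta)$ means $\sum_{S_i \succeq_v S_a} x_i \le 1 - \beta$; by completeness the complementary mass is precisely that of the committees strictly worse than $S_a$, so $\Pr_{S \sim \Delta}[S_a \succ_v S] \ge \beta$ for every such voter. Summing this inequality over all $v \in \V'$ with $S_a \in \G_v(\Delta)$, and using $\E_{S \sim \Delta}[V_{\V'}(S, S_a)] = \sum_{v \in \V'} \Pr_{S \sim \Delta}[S_a \succ_v S]$, bounds $\beta \cdot |\{v \in \V' : S_a \in \G_v(\Delta)\}|$ from above by $\E_{S \sim \Delta}[V_{\V'}(S, S_a)]$, which the stability guarantee caps at $2 \cdot \frac{w(S_a)}{K'} |\V'|$. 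Dividing by $\beta$ gives the claimed bound, and this part presents no real difficulty.

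For part (2), I would argue by averaging over $S \sim \Delta$. The key sub-claim is that for each fixed voter $v$ one has $\Pr_{S \sim \Delta}[S \in \B_v(\Delta)] \le \beta$. To see this, group the committees in the support of $\Delta$ into $v$'s indifference classes ordered from most to least preferred, and set $q(S) = \sum_{S_i \preceq_v S} x_i$, the probability mass weakly below $S$; note $q$ is constant on each indifference class and $S \in \B_v(\Delta)$ holds exactly when $q(S) \le \beta$. Since $q$ decreases as we move down the preference order, the classes with $q \le \beta$ form a suffix of the ordering, and the total mass of that suffix equals the value of $q$ at its top class, which is at most $\beta$; hence a random draw lands in $\B_v(\Delta)$ with probability at most $\beta$. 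By linearity of expectation, $\E_{S \sim \Delta}[|\{v \in \V' : S \in \B_v(\Delta)\}|] = \sum_{v \in \V'} \Pr_{S \sim \Delta}[S \in \B_v(\Delta)] \le \beta |\V'|$, so some committee $S$ with nonzero support in $\Delta$ satisfies $|\{v \in \V' : S \in \B_v(\Delta)\}| \le \beta |\V'|$; taking complements yields the stated lower bound.

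The main obstacle is the suffix argument in part (2): one must handle indifference (ties) carefully so that the set of ``bad'' committees is genuinely a contiguous bottom segment of $v$'s ranking, whose probability mass telescopes to at most $\beta$. Everything else follows directly from the stability inequality and linearity of expectation.
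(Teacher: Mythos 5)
Your proposal is correct and takes essentially the same route as the paper: part (1) is the same Markov-type counting argument against the $2$-approximate stability inequality, and part (2) is the same averaging over $S \sim \Delta$. The only difference is that your ``suffix'' argument explicitly justifies the step the paper asserts without proof, namely that $\Pr_{S \sim \Delta}\left[S \in \B_v(\Delta)\right] \le \beta$ for each voter $v$.
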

\begin{proof}
To see the first part, for any committee $S_a$,  we have: $1 - \sum_{S_i \succeq_v S_a} x_i =  \Pr_{S_i \sim \Delta}[S_i \prec_v S_a].$ Summing over $v \in \V'$,
\[
\sum_{v \in \V'} \left(1 - \sum_{S_i \succeq_v S_a} x_i\right) = \E_{S_i \sim \Delta} \left[V_{\V'}(S_i, S_a) \right] < \frac{2w(S_a)}{K'} \cdot |\V'|,
\]
where the inequality comes from the fact that $\Delta$ is $2$-approximately stable. Thus there are fewer than $\frac{2}{\beta} \cdot \frac{w(S_a)}{K'} \cdot |\V'|$ voters $v \in \V'$ with $\sum_{S_i \succeq_v S_a} x_i \leq 1 - \beta$, which is necessary for $S_a \in \G_v(\Delta)$.

To see the second part, suppose $S \sim \Delta$. Then for each $v \in \V'$, since $\sum_{S_i \notin \B_v(\Delta)} x_i \geq 1 - \beta$, we have: $\Pr \left[S \notin \B_v(\Delta) \right] \ge 1 - \beta$. Therefore, the expected number (over the choice $S \sim \Delta$) of $v$ such that $S \notin \B_v(\Delta)$  is at least $(1-\beta) \cdot |\V'|$. Therefore, there exists an $S$ for which the claim holds.
\end{proof}


\subsection{Algorithm}
Algorithm~\ref{alg1} shows our full procedure. The main idea is the following: If we pick a committee $S$ that does not lie in $\B_v(\Delta)$ for most voters $v$, then by Claim~\ref{claim1}, $S_a$ is forced to lie in $\G_v(\Delta)$ for these voters if $S_a$ beats $S$. But since $\Delta$ is $2$-approximately stable, by Lemma~\ref{lem:satisfied}, there are only a small number of $v$ where $S_a$ can lie in $\G_v(\Delta)$. We can therefore remove these set of voters for whom $S \notin \B_v(\Delta)$, since $S$ makes sure no $S_a$ can capture too many of these voters. This reduces the number of voters by a constant factor. For the remaining voters, we recursively find another committee of smaller (but not too much smaller) weight, which reduces the number of voters by another constant factor; and so on. The key point is that the total weight of all these committees is a geometric sequence, {\em and} the number of voters who can be captured by $S_a$ in each round is also a geometric sequence, showing a constant-approximately stable solution.

\begin{algorithm}[htbp]
\caption{Iterated Rounding}
\label{alg1}
\begin{algorithmic}[1]
	\State  $t \gets 0$;  $\V^{(0)} \gets [n]$; $T^{(0)} \gets \varnothing$;  $K^{(0)} \gets (1-\alpha) K$.
	\While{$\V^{(t)} \neq \varnothing$}
		\State $\Delta^{(t)} \gets $ {\sc Lottery}$(\V^{(t)},K^{(t)})$.
		\State \label{state1} Let $S^{(t)}$ be any committee such that $\left| \left\{v \in \V^{(t)} \ \middle| \ S^{(t)} \notin \B_v\left(\Delta^{(t)}\right)  \right\} \right| \ge (1-\beta) \cdot |\V^{(t)}|$.
		\State $\W^{(t)} \gets  \left\{v \in \V^{(t)} \ \middle| \ S^{(t)} \notin \B_v\left(\Delta^{(t)}\right)  \right\}$.
		\State $\V^{(t+1)} \gets V^{(t)} \setminus \W^{(t)}$.
		\State $T^{(t+1)} \gets T^{(t)} \cup S^{(t)}$.
		\State $K^{(t+1)} \gets \alpha K^{(t)}$.
		\State $t \gets t+1$.
	\EndWhile
	\State \Return $T^f \gets T^{(t)}$.
\end{algorithmic}
\end{algorithm}

\subsection{Analysis}
Line \ref{state1} in Algorithm~\ref{alg1} is correct by Lemma~\ref{lem:satisfied}. We next bound the weight of the final set.

\begin{lemma}
$w(T^f) \le K$.
\end{lemma}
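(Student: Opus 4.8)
The plan is to show that the final committee's weight is bounded by $K$ by tracking the weights $K^{(t)}$ of the committees added in each iteration and observing that they form a geometric sequence. The initial committee size is $K^{(0)} = (1-\alpha)K$, and the update rule $K^{(t+1)} = \alpha K^{(t)}$ gives $K^{(t)} = (1-\alpha)\alpha^t K$. Since each $S^{(t)}$ is drawn from $\textsc{Lottery}(\V^{(t)}, K^{(t)})$ and therefore has weight at most $K^{(t)}$, the total weight of $T^f = \bigcup_t S^{(t)}$ is at most the sum of the individual weights.

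First I would write $w(T^f) \le \sum_{t \ge 0} w(S^{(t)})$, using subadditivity (here just additivity) of the weight function over the union. Then, since each $S^{(t)}$ has non-zero support in $\Delta^{(t)} = \textsc{Lottery}(\V^{(t)}, K^{(t)})$, a lottery over committees of weight at most $K^{(t)}$, we have $w(S^{(t)}) \le K^{(t)} = (1-\alpha)\alpha^t K$. Summing the geometric series yields
\[
w(T^f) \le \sum_{t \ge 0} (1-\alpha)\alpha^t K = (1-\alpha) K \cdot \frac{1}{1-\alpha} = K.
\]
This is the entire argument.

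I do not expect any serious obstacle here, as this is a routine geometric-sum calculation; the only points requiring care are (i) confirming that every committee in the support of a lottery returned by \textsc{Lottery} genuinely has weight at most the size parameter passed to it (which holds by the definition of \textsc{Lottery} and Theorem~\ref{thm:random}, whose lotteries are supported on committees of weight at most the given bound), and (ii) ensuring the union over all iterations does not increase the weight beyond the sum, which is immediate for additive (and more generally subadditive) weight functions. The choice of $K^{(0)} = (1-\alpha)K$ rather than $K$ is precisely what makes the geometric series sum to exactly $K$, so the bound is tight by design. I would remark that the same bound persists in the subadditive setting since $w\left(\bigcup_t S^{(t)}\right) \le \sum_t w(S^{(t)})$ continues to hold there.
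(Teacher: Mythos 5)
Your proof is correct and is essentially identical to the paper's: both bound $w(T^f)$ by the sum $\sum_t w(S^{(t)})$ via subadditivity, use $w(S^{(t)}) \le K^{(t)} = (1-\alpha)\alpha^t K$, and sum the geometric series to get $K$. The only difference is a cosmetic shift in the indexing of $t$.
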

\begin{proof}
$w(T^f) \le \sum_{t \ge 1} w(S^{(t)}) \le \sum_t \alpha^{t-1} (1-\alpha) K \le K$.
\end{proof}

We finally show that the resulting set is approximately stable, completing the proof of Theorem~\ref{thm:main}.

\begin{lemma}
When $\alpha = \frac{1}{2}$ and $\beta = \frac{1}{4}$, then $T^f$ is a $32$-approximately stable committee of weight at most $K$.
\end{lemma}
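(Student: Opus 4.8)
The plan is to bound $V(T^f, S_a)$ for an arbitrary deviating committee $S_a$ by decomposing the voters it captures according to the round $t$ in which they are removed, and showing that each per-round contribution forms a convergent geometric sequence whose total is at most $32 \cdot \frac{w(S_a)}{K} \cdot n$.

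First I would establish the key structural step linking $T^f$ back to each round's committee. Every voter counted in $V(T^f,S_a)$ is removed in exactly one round, say round $t$, meaning $v \in \W^{(t)}$ and hence $S^{(t)} \notin \B_v(\Delta^{(t)})$. Because $T^f \supseteq S^{(t)}$ by construction, monotonicity gives $T^f \succeq_v S^{(t)}$; combining this with $S_a \succ_v T^f$ via transitivity yields $S_a \succ_v S^{(t)}$. Since $S^{(t)} \notin \B_v(\Delta^{(t)})$, Claim~\ref{claim1} then forces $S_a \in \G_v(\Delta^{(t)})$. Consequently the voters of $V(T^f,S_a)$ removed in round $t$ form a subset of $\{v \in \V^{(t)} : S_a \in \G_v(\Delta^{(t)})\}$, whose cardinality is controlled by part~1 of Lemma~\ref{lem:satisfied} as $\frac{2}{\beta}\cdot\frac{w(S_a)}{K^{(t)}}\cdot|\V^{(t)}|$.

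Next I would insert the two geometric rates. The budgets satisfy $K^{(t)} = \alpha^t (1-\alpha) K$ by the update rule, and because line~\ref{state1} guarantees each round strips at least a $(1-\beta)$ fraction of survivors, the surviving pool shrinks as $|\V^{(t)}| \le \beta^t n$. Substituting both into the per-round bound collapses the $t$-dependence into a factor $(\beta/\alpha)^t$ multiplied by the $t$-independent prefactor $\frac{2}{\beta(1-\alpha)}\cdot\frac{w(S_a)}{K}\cdot n$. Summing over $t \ge 0$, and using $\beta < \alpha$ so the series converges, gives $V(T^f,S_a) < \frac{2}{\beta(1-\alpha)}\cdot\frac{1}{1-\beta/\alpha}\cdot\frac{w(S_a)}{K}\cdot n$. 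Plugging in $\alpha=\tfrac12$ and $\beta=\tfrac14$ produces prefactor $16$ and geometric sum $2$, yielding the claimed bound $32\cdot\frac{w(S_a)}{K}\cdot n$.

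The main obstacle I anticipate is the interplay between the two geometric rates: the argument only works when $\beta < \alpha$, so that the shrinking budget $K^{(t)}$ in the denominator (which \emph{inflates} each per-round over-count) is dominated by the even faster decay of the voter pool $|\V^{(t)}|$ in the numerator. Pinning down the constant requires tracking both rates simultaneously and confirming that the $\frac{2}{\beta}$ factor inherited from the $2$-approximate stability of each $\Delta^{(t)}$ does not accumulate across rounds. The remaining ingredients are routine: the monotonicity-plus-transitivity reduction is clean, and the final summation is a standard geometric series.
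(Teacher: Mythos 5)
Your proposal is correct and follows essentially the same route as the paper: the same decomposition of $V(T^f,S_a)$ over the sets $\W^{(t)}$, the same monotonicity-plus-transitivity reduction from $T^f$ to $S^{(t)}$, the same application of Claim~\ref{claim1} and part~1 of Lemma~\ref{lem:satisfied}, and the same pair of geometric rates $|\V^{(t)}|\le\beta^t n$ and $K^{(t)}=\alpha^t(1-\alpha)K$ summed to give $\frac{2\alpha}{\beta(1-\alpha)(\alpha-\beta)}=32$. No gaps.
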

\begin{proof}
Given $S_a$, since $\V = [n] = \bigcup_{t \ge 1} \W^{(t)}$ and $T^f = \bigcup_{t\ge 1} S^{(t)}$, using monotonicity, we have:
\[
V\left(T^f, S_a\right) \le \sum_{t \ge 1} V_{\W^{(t)}}\left(T^f,S_a\right) \le  \sum_{t \ge 1} V_{\W^{(t)}}\left(S^{(t)},S_a\right).
\]
Since $S^{(t)} \notin \B_v\left(\Delta^{(t)}\right)$ for $v \in \W^{(t)}$, by Claim~\ref{claim1}, $S_a \succ_v S^{(t)}$ only if $S_a \in \G_v\left(\Delta^{(t)}\right)$. By Lemma~\ref{lem:satisfied},
\[
\left|\left\{v \in \V^{(t)} \ \middle| \ S_a \in \G_v\left(\Delta^{(t)}\right)  \right\} \right| < \frac{2}{\beta} \cdot \frac{w(S_a)}{K^{(t)}} \cdot |\V^{(t)}|.
\]
Note now that $|\V^{(t)}| \le \beta |\V^{(t-1)}|$, so that $ |\V^{(t+1)}|  \le \beta^t n$. Furthermore, $K^{(t)} = \alpha K^{(t-1)}$, so that $K^{(t+1)} = \alpha^t (1-\alpha) K$. Therefore,
\begin{eqnarray*}
 V_{\W^{(t)}}\left(S^{(t)},S_a\right) & \le &  \left|\left\{v \in \V^{(t)} \ \middle| \ S_a \in \G_v\left(\Delta^{(t)}\right) \right\} \right| \\
 & < & \frac{2}{\beta} \cdot \frac{w(S_a)}{K^{(t)}} \cdot |\V^{(t)}|  \le  \left(\frac{\beta}{\alpha} \right)^{t-1} \cdot \frac{2}{(1-\alpha)\beta} \cdot \frac{w(S_a)}{K} \cdot n. \\
\end{eqnarray*}
Summing over all $t$, we have
$$  V\left(T^f, S_a\right) \le  \sum_{t \ge 1} V_{\W^{(t)}}\left(S^{(t)},S_a\right) < \frac{2 \alpha}{\beta(1-\alpha)(\alpha-\beta)} \cdot  \frac{w(S_a)}{K} \cdot n. $$
This is minimized when $\beta = \frac{\alpha}{2}$. Setting $\alpha = \frac{1}{2}$, this is at most $32 \cdot \frac{w(S_a)}{K} \cdot n $, completing the proof.
\end{proof}

\section{Extensions}
\label{sec:exten}
We now present some extensions of the above results to the setting where weights are subadditive, and there are multiple weight constraints. We also show settings in which the Algorithm~\ref{alg1} has an efficient implementation.

\subsection{General Weight Functions}
\noindent{\bf Subadditive Weights.} A careful reader may have observed that the only property we have used in the proofs of Theorems~\ref{thm:main} and~\ref{thm:random} is that $w(S_1 \cup S_2) \le w(S_1) + w(S_2)$.  Therefore, we have: 

\begin{corollary}
There is a $32$-approximately stable committee for any subadditive weight function $w(S)$ over committees, and any monotone preferences of the voters.
\end{corollary}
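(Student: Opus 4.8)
The plan is to revisit every inequality in the proofs of Theorem~\ref{thm:random} and Theorem~\ref{thm:main}, and verify that nowhere do we appeal to additivity of the weight function beyond the single fact that the weight of a union is bounded by the sum of the weights. Concretely, I would trace through the two places where weights of combined committees are bounded: first, the weight bound in the dependent-rounding step of Theorem~\ref{thm:random}, and second, the weight bound $w(T^f) \le K$ in the final algorithm of Theorem~\ref{thm:main}.

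In the rounding step of Section~\ref{sec:randomized}, the defending committee is formed as $S_d = \bigcup_{i : X_i > 0} S_i$. The argument that $w(S_d) \le K$ splits the included committees into those with $X_i = 1$ and the at most one with $X_i \in (0,1)$, and concludes $w(S_d) \le \frac{K}{2} + \frac{K}{2}$. This step used subadditivity in exactly the form $w\left(\bigcup_i S_i\right) \le \sum_i w(S_i)$, together with the facts that the fully-included committees had total weight at most $\frac{K}{2}$ (preserved-weight property) and the single fractional committee had weight at most $\frac{K}{2}$ (since each $S_i$ has weight at most $\frac{K}{2}$). Since the preserved-weight property of the dependent-rounding procedure of~\cite{byrka} is about the linear quantity $\sum_i w(S_i) X_i$ and makes no reference to how weights compose under union, this part goes through verbatim once we replace the final additive bound with subadditivity. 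I would also check that the reduction ``remove attacker committees of weight exceeding $\frac{K}{2}$'' only uses that each $w(S_a)$ is a well-defined nonnegative number, which it does.

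In Algorithm~\ref{alg1} and its analysis, the weight bound $w(T^f) \le \sum_{t \ge 1} w(S^{(t)}) \le \sum_t \alpha^{t-1}(1-\alpha)K \le K$ is again precisely subadditivity applied to $T^f = \bigcup_{t \ge 1} S^{(t)}$, with the per-round bound $w(S^{(t)}) \le K^{(t)} = \alpha^{t-1}(1-\alpha)K$ coming from {\sc Lottery} returning committees of bounded weight. The only other use of ``union'' in the approximate-stability analysis is the monotonicity step $V_{\W^{(t)}}(T^f, S_a) \le V_{\W^{(t)}}(S^{(t)}, S_a)$, which rests on $S^{(t)} \subseteq T^f$ and the monotonicity of preferences, and is entirely independent of the weight function. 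Thus every counting inequality survives unchanged, and I would confirm that Lemma~\ref{lem:satisfied}, whose proof manipulates only probabilities $x_i$ and the stability guarantee, contains no weight composition at all.

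I do not anticipate a genuine obstacle here, since the result is a direct ``audit'' corollary; the main care is simply to be exhaustive and confirm there is no hidden place where we compute the weight of a committee by summing over its candidates in a way that would break for a general subadditive $w$. The one subtlety worth double-checking is that the {\sc Lottery} subroutine itself, as established in Theorem~\ref{thm:random}, already holds for subadditive $w$ (which the preceding paragraph confirms), so that the recursive invocations in Algorithm~\ref{alg1} remain valid in the subadditive setting. Once these checks pass, the statement follows immediately with the same constant $c = 32$.
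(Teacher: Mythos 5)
Your proposal is correct and matches the paper's own (one-line) justification: the corollary is proved by observing that the only property of $w$ used in Theorems~\ref{thm:main} and~\ref{thm:random} is $w(S_1 \cup S_2) \le w(S_1) + w(S_2)$, and you have simply carried out that audit explicitly at the two places where unions of committees are weighed. Nothing further is needed.
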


\noindent {\bf Multiple Constraints.} We note that Theorems~\ref{thm:main} and~\ref{thm:random} naturally extend to the following setting with multiple weight constraints. In the multi-constraint setting, there are $Q$ types of resources, and the weight limit of the $i$-th resource is $K_i$. Given a subadditive weight $w_j(S)$ for committee $S$ and resource $j$, we select a committee $S$ so that all $Q$ constraints $w_j(S) \leq K_j$ are respected. A coalition $\mathcal{V}' \subseteq \mathcal{V}$ should only have access to $\frac{|\mathcal{V}'|}{|\mathcal{V}|} \cdot K_j$ amount of resource $j$.
\begin{define}[Stable Committees with Multiple Constraints]
\label{def:stable_multi}
\rm
Given a committee $S \subseteq \C$ of weight at most $(K_1, K_2, \ldots, K_Q)$, a committee $S' \subseteq \C$ of weight $(K_1', K_2', \ldots, K_Q')$ blocks $S$ iff $V(S,S') \ge \frac{K_j'}{K_j} \cdot n$ for all $j \in [Q]$. A committee $S$ is \emph{stable} if no committee $S'$ blocks it.
\end{define}
Notions of $c$-approximately stable committees and lotteries can be similarly generalized. By normalizing the weights, we can assume the cost limits are $K_1 = K_2 = \cdots = K_Q = K$. Redefine the weight of a committee $S$ to be the maximum weight across the resources, $\emph{i.e.}$, $w(S) := \max_{j \in [Q]} w_j(S)$. This weight function is also subadditive. Further, any ($c$-approximately) stable solution in the new single-resource instance would also be $c$-approximately stable in the original multi-resource one: It is straightforward to verify all $Q$ constraints are satisfied in the original setting and the no-deviation requirements are exactly the same in both settings. We therefore have:

\begin{corollary}
There is a $32$-approximately stable committee in the setting with $Q \ge 1$ resources.
\end{corollary}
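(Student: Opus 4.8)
The plan is to reduce the multi-constraint problem to the single-constraint subadditive setting already handled by the preceding corollary, exactly as the surrounding discussion anticipates. First I would normalize the instance so that all cost limits coincide: replacing each weight function $w_j$ by $\frac{K}{K_j} w_j$ rescales the $j$-th constraint to a common bound $K$ while leaving every ratio $\frac{w_j(S')}{K_j}$ — and hence every ($c$-)blocking condition — unchanged, and it clearly preserves subadditivity of each $w_j$.

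Next I would collapse the $Q$ constraints into a single one by defining $w(S) := \max_{j \in [Q]} w_j(S)$. This $w$ is again subadditive: for any committees $A, B$,
\[
w(A \cup B) = \max_j w_j(A \cup B) \le \max_j \bigl(w_j(A) + w_j(B)\bigr) \le w(A) + w(B),
\]
using subadditivity of each $w_j$ in the first inequality. Applying the subadditive-weights corollary to this single-resource instance with bound $K$ then yields a $32$-approximately stable committee $T$ with $w(T) \le K$.

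It remains to verify that $T$ is feasible and $32$-approximately stable in the original multi-resource instance. Feasibility is immediate: $w(T) \le K$ gives $w_j(T) \le \max_{j'} w_{j'}(T) = w(T) \le K = K_j$ for every $j$, so all $Q$ constraints hold. For stability, the key observation is that a multi-resource $c$-block requires $V(T,S') \ge c \cdot \frac{w_j(S')}{K} \cdot n$ to hold \emph{simultaneously} for all $j$; thus $T$ is $c$-stable in the multi-resource sense precisely when, for every $S'$, at least one $j$ fails this, i.e. $V(T,S') < c \cdot \frac{w_j(S')}{K} \cdot n$ for some $j$. Since the weakest such bound is attained at $j = \argmax_j w_j(S')$, this is equivalent to $V(T,S') < c \cdot \frac{\max_j w_j(S')}{K} \cdot n = c \cdot \frac{w(S')}{K} \cdot n$, which is exactly the single-resource $c$-stability condition that $T$ satisfies with $c = 32$.

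The reduction is clean, and the only genuinely substantive point is recognizing that a \emph{conjunctive} blocking condition (all constraints violated at once) is governed by the \emph{maximum} weight rather than, say, the sum or the minimum; choosing $\max$ is what makes the two stability notions coincide exactly, rather than merely up to a factor of $Q$. No fixed-point or rounding machinery beyond the already-established single-resource result is required, so I do not expect any serious obstacle here.
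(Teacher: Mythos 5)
Your proof is correct and follows essentially the same route as the paper: normalize the limits to a common $K$, collapse the constraints via $w(S) := \max_{j} w_j(S)$, note subadditivity, and invoke the single-resource subadditive corollary. You additionally spell out the equivalence of the conjunctive blocking condition with the max-weight single-resource condition, which the paper leaves as ``straightforward to verify.''
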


\subsection{Running Time}
Our main result above is that of existence of approximately stable committees.  If preferences are arbitrary, then we can find this solution by brute-force calculation of $V(S,S')$ for all pairs of feasible committees $(S,S')$, which takes time exponential in $K$. Our algorithm has comparable running time, and the bottleneck is constructing a stable lottery efficiently. Indeed, Algorithm~\ref{alg1} runs in poly$(m,n)$ time if we can find an approximately stable lottery with polynomial size support in polynomial time. Achieving this for {\sc Approval Set} or {\sc Participatory Budget} setting is still an open question. We now present some settings where a more efficient implementation is possible.

We first define the following notion of $(c,L)$-approximately stable committee, generalizing the notion defined in~\cite{ChengJMW19} to arbitrary weights.

\begin{define}[$(c,L)$-Approximately Stable Committee]
\rm
 A committee $S \subseteq \C$ of weight at most an integer value $K$ is $(c,L)$-\emph{approximately stable} for $1 \le L \le K$ if there is no committee $S'$ with {\em at most} $L$ candidates such that $V(S,S') \ge c \cdot \frac{w(S')}{K} \cdot n$.
\end{define}

In the {\sc Approval set} setting, a $(1,1)$-stable committee is exactly a committee that satisfies Justified Representation~\cite{Brill}, which in itself is a non-trivial property. If we restrict the attacking committees $S_a$ to have at most $L$ candidates, the number of such committees is $O\left(m^L\right)$. It is now easy to show that the $(2+\varepsilon,L)$-approximately stable lottery in Section~\ref{sec:random} can be computed in time poly$\left(m^L,n,\frac{1}{\varepsilon}\right)$ via the multiplicative weight update (MWM) method for solving zero-sum games. The idea is that given a distribution over $S_a$, the defending strategy involves dependent rounding over this distribution and hence runs in poly$\left(m^L\right)$ time. The number of rounds of MWM will be polynomial in the number of attacking strategies and $\frac{1}{\varepsilon}$, and the resulting distribution over defending strategies will have a support of size poly$\left(m^L,\frac{1}{\varepsilon}\right)$. This implies Algorithm~\ref{alg1} is efficient for constant $L$.

\begin{corollary}
For any $1 \le L \le K$, a $(32+\varepsilon,L)$-approximately stable committee can be computed in poly$\left(m^L,n,\frac{1}{\varepsilon}\right)$ time.
\end{corollary}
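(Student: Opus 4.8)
The plan is to establish the corollary by combining the iterated-rounding algorithm (Algorithm~\ref{alg1}) with an efficient subroutine for computing approximately stable lotteries when deviating committees are restricted to at most $L$ candidates. The only obstacle to a polynomial-time implementation of the existence proof is the \textsc{Lottery} subroutine, since everything else in Algorithm~\ref{alg1} (identifying the committee $S^{(t)}$, updating voter sets, etc.) is a straightforward computation once the lottery and its support are in hand. So the entire task reduces to: (a) computing a $(2+\varepsilon)$-approximately stable lottery restricted to attacking committees of size at most $L$, in $\mathrm{poly}(m^L, n, 1/\varepsilon)$ time and with polynomial support; and (b) verifying that plugging this into Algorithm~\ref{alg1} degrades the approximation factor only from $32$ to $32+O(\varepsilon)$.

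\textbf{Step 1: Bound the number of attacking strategies.} Since each attacking committee $S_a$ has at most $L$ candidates out of $m$, there are at most $\sum_{j=0}^{L}\binom{m}{j} = O(m^L)$ such committees. This gives a finite, explicitly enumerable action set for the attacker in the zero-sum game~(\ref{eq:dual_game}), which is the key quantity controlling the running time.

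\textbf{Step 2: Solve the zero-sum game via multiplicative weights.} I would run the multiplicative weight update method on the game~(\ref{eq:primal_game})--(\ref{eq:dual_game}), treating the attacker's mixed strategy $\Delta_a$ as a distribution over the $O(m^L)$ restricted committees. In each round, the MWM algorithm maintains a distribution $\Delta_a$ over these committees, and the defender must respond with a committee $S_d$ of weight at most $K'$ satisfying the per-voter guarantee~(\ref{eq:random_goal2}). Crucially, this best-response is exactly the \emph{dependent rounding} procedure from Section~\ref{sec:randomized}: given the weights $\alpha_i$ from the current $\Delta_a$, we set $p_i = \min(1, \alpha_i/(2\beta))$ and run the rounding of~\cite{byrka} over the (at most $O(m^L)$) committees in the support, which runs in $\mathrm{poly}(m^L)$ time and produces a valid defending committee. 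The proof of Theorem~\ref{thm:random} shows this response certifies a game value below $0$ for $c = 2$, so MWM converges to an $\varepsilon$-approximate equilibrium in $O(\mathrm{poly}(m^L)\cdot \log(\cdot)/\varepsilon^2)$ rounds, yielding a defender distribution $\Delta_d$ that is $(2+\varepsilon)$-approximately stable against all size-$\le L$ attacks, with support of size $\mathrm{poly}(m^L, 1/\varepsilon)$.

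\textbf{Step 3: Track the approximation factor through Algorithm~\ref{alg1}.} Re-running the analysis of Section~\ref{sec:det} with the lottery now only $(2+\varepsilon)$-approximately stable, the sole change is that the ``$2$'' in part~1 of Lemma~\ref{lem:satisfied} becomes ``$2+\varepsilon$''; every subsequent inequality carries this factor linearly, so the final bound of $32$ becomes $32 \cdot \frac{2+\varepsilon}{2} = 32 + 16\varepsilon$. Since the attacker in Algorithm~\ref{alg1} also ranges only over committees of size at most $L$, the resulting $T^f$ is precisely a $(32 + 16\varepsilon, L)$-approximately stable committee; rescaling $\varepsilon$ gives the claimed $(32+\varepsilon, L)$ bound. \textbf{The main obstacle} I anticipate is bookkeeping the interaction between the MWM best-response and the dependent-rounding guarantee: one must confirm that the approximate-equilibrium value returned by MWM genuinely certifies $(2+\varepsilon)$-stability against \emph{every} size-$\le L$ committee (not merely those in the final support), which follows because the min-max value bounds the defender's performance against all pure attacker strategies simultaneously, and that the polynomial support size is preserved across all $O(\log n)$ iterations of the outer loop of Algorithm~\ref{alg1}.
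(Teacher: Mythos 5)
Your proposal is correct and follows essentially the same route as the paper: enumerate the $O(m^L)$ restricted attacking committees, solve the zero-sum game by multiplicative weights with the dependent-rounding procedure of Section~\ref{sec:randomized} as the defender's best response to obtain a $(2+\varepsilon,L)$-approximately stable lottery of polynomial support, and feed it into Algorithm~\ref{alg1}. Your Step 3, which explicitly tracks how the $\varepsilon$ propagates linearly through Lemma~\ref{lem:satisfied} to give $32+O(\varepsilon)$, is a detail the paper leaves implicit but is consistent with its argument.
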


In the {\sc Ranking} setting with additive weights, it is easy to observe that a committee is $c$-approximately stable iff it is $(c,1)$-approximately stable.  This directly implies the following.  

\begin{corollary}
\label{cor1}
For sufficiently small constant $\varepsilon > 0$, a  $(32+\varepsilon)$-approximately stable committee for {\sc Ranking} and {\sc Facility Location} preferences, even when candidates have arbitrary (additive) weights, can be computed in poly$\left(m,n,\frac{1}{\varepsilon}\right)$ time.
\end{corollary}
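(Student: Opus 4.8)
The plan is to reduce the problem to computing a $(32+\varepsilon,1)$-approximately stable committee and then invoke the preceding corollary (the one for $(32+\varepsilon,L)$-approximately stable committees) with $L=1$. The crux is a structural observation, which I would establish first: in the {\sc Ranking} setting with additive weights, a committee $S$ is $c$-approximately stable if and only if it is $(c,1)$-approximately stable. The forward direction is immediate, since restricting the allowed deviations to single-candidate committees can only weaken the stability requirement. All of the content lies in the reverse direction, and this is the step I expect to be the main (though mild) obstacle, since it is exactly where both additivity of the weights and the top-candidate structure of {\sc Ranking} are essential.

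For the reverse direction I would argue by contrapositive. Suppose some committee $S_a$ of weight $w(S_a)$ blocks $S$, i.e. $V(S,S_a) \ge c \cdot \frac{w(S_a)}{K} \cdot n$; I would produce a single candidate $i \in S_a$ whose singleton $\{i\}$ also $c$-blocks $S$. The key leverage is that in {\sc Ranking} each voter evaluates a committee solely by her top-ranked member. Hence every voter $v$ with $S_a \succ_v S$ has a favorite candidate $f_v \in S_a$, and the singleton $\{f_v\}$ is itself strictly preferred to $S$ by $v$, since $v$'s favorite in $\{f_v\}$ is $f_v$, her favorite in $S_a$. Partitioning the blocking voters by the value of $f_v$, I would let $V_i$ denote the set of blocking voters whose favorite in $S_a$ is $i$, so that $\sum_{i \in S_a} |V_i| = V(S,S_a)$, and $V(S,\{i\}) \ge |V_i|$ for each $i \in S_a$.

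An averaging (pigeonhole) step then finishes the argument. If no singleton $c$-blocked $S$, then $|V_i| \le V(S,\{i\}) < c \cdot \frac{s_i}{K} \cdot n$ for every $i \in S_a$; summing over $i \in S_a$ and using additivity $w(S_a) = \sum_{i \in S_a} s_i$ yields $V(S,S_a) = \sum_{i \in S_a} |V_i| < c \cdot \frac{w(S_a)}{K} \cdot n$, contradicting that $S_a$ blocks. With the equivalence in hand, I would apply the previous corollary with $L=1$: a $(32+\varepsilon,1)$-approximately stable committee is computable in poly$\left(m,n,\frac{1}{\varepsilon}\right)$ time (as $m^L = m$ for $L=1$), and by the equivalence it is a genuine $(32+\varepsilon)$-approximately stable committee. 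Finally, since {\sc Facility Location} is a special case of {\sc Ranking}, the same bound carries over verbatim, completing the proof.
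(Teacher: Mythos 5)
Your proposal is correct and follows the same route as the paper: the paper simply asserts that in the {\sc Ranking} setting with additive weights a committee is $c$-approximately stable iff it is $(c,1)$-approximately stable, and then invokes the preceding corollary with $L=1$. Your partition-by-favorite-candidate argument with the averaging step is precisely the (unstated) justification of that equivalence, so you have merely filled in the details the paper calls ``easy to observe.''
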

This can be improved to a $(16+\varepsilon)$-approximation when the candidates are unweighted, since an exactly stable lottery exists in this setting~\cite{ChengJMW19}. We  emphasize that prior to this, no such result was known, even for existence of an approximately stable solution for these preferences.


\section{Open Questions}
\label{sec:open}
There are several challenging open questions,  both for existence and computation. 

\begin{itemize}
\item Does an exactly stable lottery always exist for all monotone preference structures? We show its existence for committees of weight $K \le 3$ and unit-weight candidates in Appendix~\ref{sec:three}. Though there is some intuition that the problem in Section~\ref{sec:randomized} resembles fractional knapsack and hence must overflow the knapsack while rounding, our proof for $K = 3$ shows that this intuition is misleading. Indeed, our proof uses a different rounding procedure than standard dependent rounding, and it is an open question whether such a procedure always exists.
\item Does an exactly stable committee exist for the {\sc Approval set} setting? For this specific setting, no counterexample to exact stability is known. 
\item For {\sc Approval Set} or {\sc Participatory budgeting}, can an approximately stable lottery (and hence a deterministic approximation) be efficiently computed? Unlike \textsc{ranking} and \textsc{Facility Location} settings, it is possible that a solution is not reasonably approximately stable, but no deviating coalition is small. (See, \emph{e.g.}, the $\Omega(\sqrt{K})$ lower bound example for PAV rules in~\cite{ChengJMW19}.) On the other hand, though there are exponentially many committees, the preference structure in these settings is simple and we cannot rule out polynomial time algorithms.
\item Does a $2$-approximately stable committee always exist for any monotone preference structure? We conjecture that Theorem~\ref{thm:lb} in Appendix~\ref{app:lb} is in fact tight, and the factor of $32$ in Theorem~\ref{thm:main} can possibly be lowered by other approaches.
\end{itemize}

\section*{Acknowledgments}
We thank Yu Cheng for many helpful discussions. This work was done while Zhihao Jiang was visiting Duke University. This work is supported by NSF grants CCF-1408784, CCF-1637397, and IIS-1447554; ONR award N00014-19-1-2268; and research awards from Adobe and Facebook.

\bibliographystyle{abbrv}
\bibliography{abb,refs}

\appendix
\section{Lower Bound on Approximation}
\label{app:lb}
It is relatively easy to construct a {\sc Participatory Budgeting} preference profile where a $(2 - \varepsilon)$-approximately stable committee does not exist. This instance has cyclic preferences. There are $m$ candidates $\{c_i\}_{i \in [m]}$ of unit weight, and $n = m$ voters $\{v_i\}_{i \in [n]}$. Let $K = 2 - \frac{\varepsilon}{2}$. The preference of $v_i$ is:
\[
c_i \succ c_{i + 1} \succ \cdots \succ c_m \succ c_1 \succ \cdots \succ c_{i - 1}.
\]
Any feasible committee is some single candidate $c_i$, but all voters except $v_i$ can deviate and choose $c_{i - 1}$ (or $c_m$ if $i = 1$). Therefore, the approximation ratio is at least $\frac{m - 1}{n} \cdot K = \frac{m - 1}{m} \cdot \left(2 - \frac{\varepsilon}{2}\right) > 2 - \varepsilon$ when $m$ is large enough.

\medskip
We now strengthen this example to show that even for the {\sc Ranking} setting with unit-weight candidates and integral committee weight $K$, there exist instances where a $(2 - \varepsilon)$-approximately stable committee does not exist.
\begin{theorem}
\label{thm:lb}
In the unweighted \textsc{Ranking} setting, $(2 - \varepsilon)$-approximately stable deterministic committees of integral size $K$ may not exist for any constant $\varepsilon > 0$.
\end{theorem}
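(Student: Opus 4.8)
The plan is to reduce to single-candidate deviations and then exhibit, via the probabilistic method, a \textsc{Ranking} instance in which every committee of weight $K$ is blocked by a single candidate at ratio approaching $2$ as $K$ grows. First I would invoke the equivalence noted earlier for the \textsc{Ranking} setting with additive weights, that a committee is $c$-approximately stable iff it is $(c,1)$-approximately stable. Thus it suffices to construct an instance in which, for every committee $S$ with $|S| = K$, there is a candidate $c \notin S$ with $V(S,\{c\}) \ge (2-\varepsilon)\frac{n}{K}$. Writing $f_S(c) := V(S,\{c\})$ for the number of voters who rank $c$ strictly above their favorite candidate in $S$ --- equivalently, above \emph{all} of $S$ --- the goal becomes: for every $K$-subset $S$, $\max_{c \notin S} f_S(c) \ge (2-\varepsilon)\frac{n}{K}$. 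It is enough to handle committees of size exactly $K$: any smaller committee $S'$ extends to some size-$K$ committee $S \supseteq S'$, and by monotonicity $f_{S'}(c) \ge f_S(c)$, so a candidate blocking $S$ also blocks $S'$.

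The obstacle is that the straightforward cyclic instance (as in the \textsc{Participatory Budgeting} example above) only yields ratio $1$: when $S$ is spread evenly, the best deviating candidate wins exactly one ``gap'' of $\approx n/K$ voters, matching the proportional share $n/K$. Breaking past ratio $1$ requires a candidate whose winning set \emph{concentrates} roughly twice the proportional share. To force this I would take an instance with a very large candidate set and random preferences: let each of the $n$ voters assign independent $\mathrm{Uniform}[0,1]$ scores to the $m$ candidates and rank by score (strict almost surely), with $K = K(\varepsilon)$ a large constant and $m = \exp(\Theta(n/K))$ exponentially large. Intuitively, with so many candidates, for \emph{any} fixed $S$ some candidate will, by luck, sit above all of $S$ for a doubled fraction of voters.

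To make this precise I would fix $S$ and condition on the per-voter thresholds $M_v := \max_{s \in S} \mathrm{score}_v(s)$. Conditioned on $\{M_v\}$, the events ``$\mathrm{score}_v(c) > M_v$'' are independent across candidates $c$, so the variables $f_S(c)$ are \emph{independent} across $c \notin S$, each a sum of independent Bernoullis with conditional mean $\mu := \sum_v (1 - M_v)$. Since $\E[1 - M_v] = 1 - \frac{K}{K+1} = \frac{1}{K+1}$, the mean $\mu$ concentrates around $\frac{n}{K+1}$, so the target $t = (2-\varepsilon)\frac{n}{K}$ sits a constant factor $\approx 2-\varepsilon > 1$ above $\mu$. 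An upper-tail \emph{lower} bound (Chernoff in the matching direction) gives $\Pr[f_S(c) \ge t \mid \{M_v\}] \ge e^{-\gamma n/K}$ for a constant $\gamma = \gamma(\varepsilon) > 0$; by conditional independence, $\Pr[\max_{c} f_S(c) < t \mid \{M_v\}] \le (1 - e^{-\gamma n/K})^{m-K} \le \exp(-(m-K)e^{-\gamma n/K})$. Taking expectation over $\{M_v\}$ (absorbing the exponentially small chance that $\mu$ is atypical) and a union bound over all $\binom{m}{K} \le m^K$ committees, the failure probability is at most $m^K \exp(-(m-K)e^{-\gamma n/K})$. Choosing $m = \exp(2\gamma n/K)$ makes this $\exp\big(2\gamma n - e^{\gamma n/K}(1-o(1))\big) < 1$ for $n$ large, since the doubly-exponential term dominates the linear one. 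Hence with positive probability a random instance has every weight-$K$ committee blocked by a single candidate, so such an instance exists and admits no $(2-\varepsilon)$-approximately stable committee.

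The main technical hurdle is precisely the correlation among the $f_S(c)$: without the conditioning trick they are positively associated (within a voter, an entire prefix of candidates simultaneously beats $S$), which would \emph{weaken} the ``$\max$ is large'' conclusion; the i.i.d.-scores formulation is what restores exact conditional independence. The remaining care is quantitative --- verifying the upper-tail lower bound with an $\varepsilon$-dependent rate $\gamma(\varepsilon)$, checking that the fluctuations of $\mu$ are negligible, and balancing $m$ (exponential in $n/K$) against the $m^K$ union bound. A cleaner, fully explicit instance realizing this covering property via a combinatorial design would be preferable if available, but the probabilistic construction already establishes the stated impossibility.
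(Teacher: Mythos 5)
Your reductions are fine (a single-candidate blocker is in particular a blocker, and monotonicity handles committees of size below $K$), and the conditioning trick that makes the $f_S(c)$ independent across $c \notin S$ is correct. The genuine gap is in the step where you ``absorb the exponentially small chance that $\mu$ is atypical'' into the union bound over $\binom{m}{K} \approx m^K$ committees, and this gap cannot be patched within your framework. For the main term to work you need $m \gtrsim 1/z_0$ where $z_0$ is your conditional lower bound on $\Pr[f_S(c) \ge t \mid \{M_v\}]$; writing $z_0 = e^{-\beta n/K}$, the union bound then costs $m^K \approx e^{\beta n}$, so you need the ``bad $\{M_v\}$'' event to have probability below $e^{-\beta n}$. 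But consider the event $E_\lambda$ that \emph{every} voter has $1 - M_v \le \lambda/K$, i.e., the fixed committee $S$ covers all voters unusually well. Then $\Pr[E_\lambda] = \bigl(1 - (1-\lambda/K)^K\bigr)^n \approx e^{-n\ln\frac{1}{1-e^{-\lambda}}}$, while on $E_\lambda$ the conditional tail probability drops to roughly $e^{-h(\lambda)n/K}$ with $h(\lambda) \approx (2-\varepsilon)\ln\frac{1}{\lambda}$ for small $\lambda$ (and one can check $h$ exceeds the cost exponent $\ln\frac{1}{1-e^{-\lambda}} \approx \ln\frac{1}{\lambda}$ for \emph{every} $\lambda$ when $\varepsilon < 1$). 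So for any threshold $\beta$ you might choose, $\Pr[Z < e^{-\beta n/K}] \gtrsim e^{-\beta n/(2-\varepsilon)} \gg e^{-\beta n}$: the cost of pushing all the $M_v$ up is paid once per voter, but the benefit in the blocking-rate exponent is multiplied by the target $2-\varepsilon$, so the union bound loses by exactly the factor you are trying to gain. Controlling only the fluctuations of $\mu$ is likewise insufficient, since the conditional tail depends on the whole profile $\{1-M_v\}$, not just its sum. This is not a matter of sharpening constants; the first-moment argument over the i.i.d.-scores model fails for every $\varepsilon < 1$ (it would only establish a lower bound of $1$), and it is even unclear whether the random instance itself has the desired property, since among $e^{\Theta(n)}$ committees some will cover the voters atypically well.

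The paper avoids all of this with a short explicit construction: an $r \times \ell$ grid of unit-weight candidates, where voter $v_{i,j}$ ranks rows cyclically starting from row $i$ (any candidate in an earlier row beats any candidate in a later row) and ranks candidates within a row cyclically starting from column $j$, with $K = r-1$. A pigeonhole argument shows any size-$K$ committee misses some row $i$ entirely while containing at most one candidate of row $i+1$; a suitable single candidate in row $i+1$ is then preferred to the committee by at least $2\ell - 1$ voters (all of rows $i$ and $i+1$ but one), giving ratio $\frac{(2\ell-1)(r-1)}{r\ell} \to 2$. If you want to salvage a construction in the spirit of ``concentrating twice the proportional share on one candidate,'' this deterministic two-row design is the way to do it.
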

\begin{proof}
For any positive integer $r \geq 2$ and $\ell$, we construct the following instance with $n = r \cdot \ell$ voters and $m = r \cdot \ell$ candidates. We view the candidates $\{c_{i, j}\}_{i \in [r], j \in [\ell]}$ as a matrix with $r$ rows and $\ell$ columns. The voters are $\{v_{i, j}\}_{i \in [r], j \in [\ell]}$, where $v_{i, j}$ has the following preference:
\begin{itemize}
\item For candidates not in the same row, her preference is
\[
c_{i, j_i} \succ c_{i + 1, j_{i+1}} \succ \cdots \succ c_{r, j_r} \succ c_{1, j_1} \succ \cdots \succ c_{i - 1, j_{i-1}}
\]
for any $j_1, j_2, \cdots, j_r$.
\item For candidates in the same row $i'$, her preference is
\[
c_{i', j} \succ c_{i', j + 1} \succ \cdots \succ c_{i', \ell} \succ c_{i', 1} \succ \cdots \succ c_{i', j - 1}.
\]
\end{itemize}
Let $K = r - 1$. For any deterministic committee $S_d$ of size $K$, there must be some $i \in [r]$, so that no candidate from the $i$-th row is in $S_d$, and at most one candidate from the $(i+1)$-th (or first if $i = r$) row is in $S_d$. Otherwise, every row where no candidate is selected must be followed by a row where at least $2$ candidate is selected. Thus, on average, at least $1$ candidate is selected from each row, contradicting with $K = r - 1$.

Let the candidate in the $(i+1)$-th row in $S_d$ be $c_{i+1,j}$ if there is one. Notice that $c_{i+1,j-1}$ is preferred to $S_d$ by at least $2\ell - 1$ voters: those in the $i$-th row and the $(i+1)$-th row except $v_{i+1,j}$. Therefore, the approximation ratio $c$ is at least $\frac{(2\ell - 1)K}{n} = \frac{(2\ell - 1)(r - 1)}{\ell r}$, which is close to $2$ when $\ell$ and $r$ are large.
\end{proof}
\section{Existence of Exactly Stable Lottery for $K \in \{1, 2, 3\}$}
\label{sec:three}
In this section, we strengthen the result in Section~\ref{sec:random} in the following special case: Each candidate has unit weight, and $K \in \{1, 2, 3\}$. There are $m$ candidates and $n$ voters with arbitrary monotone preferences over committees. In this setting, we show that there is a different way of constructing a dual solution that yields an exactly stable lottery. This opens up the possibility that the analysis in Section~\ref{sec:random} is not tight even for larger values of $K$. Indeed, we conjecture that there is an exactly stable lottery for any $K$ and any monotone preference structure.

\begin{reptheorem}{thm:small}
For unit-weight candidates and any number of voters with arbitrary monotone  preferences, when $K \in \{1, 2, 3\}$, an {\em exactly} stable lottery  always exists.
\end{reptheorem}

The $K = 1$ case is trivial. In the $K = 2$ case, w.l.o.g. we can assume the attacking strategy $\Delta_a$ only comprises committees of size $1$. This is because having a size-$K$ committee $S_a$ in $\Delta_a$ does not help the attacker even if all voters prefer $S_a$ to $S_d$. Then the $K = 2$ case is covered by Lemma~\ref{lem:same_size} below. Therefore we focus on the $K = 3$ case. We adopt the duality view introduced in Section~\ref{sec:randomized}. Given any attacking strategy $\Delta_a$, w.l.o.g. assume it only comprises committees of size $1$ and size $2$ for the same reason that having a size-$K$ committee in $\Delta_a$ does not help.

Let $p = \Pr_{S_a \sim \Delta_a} [|S_a| = 1]$, so that $\Pr_{S_a \sim \Delta_a} [|S_a| = 2] = 1 - p$. Note that the expected weight of $\Delta_a$ is $p + 2(1-p)$.

\paragraph{Case 1.} Suppose $p \in \{0,1\}$. In that case, all committees in the support of $\Delta_a$ have the same weight. The following lemma now shows the existence of an exactly stable lottery.

\begin{lemma}
\label{lem:same_size}
For any $K' \le K$, if every committee in the support of $\Delta_a$ has exactly the same weight $K'$, then an {\em exactly} stable lottery over committees of size at most $K$ always exists.
\end{lemma}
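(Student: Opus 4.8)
The plan is to work with the dual game \eqref{eq:dual_game} and exploit the fact that every attacking committee now has the same weight $K'$, so that the deviation threshold $c\cdot\frac{w(S_a)}{K}\cdot n$ collapses to the constant $\frac{K'}{K}\cdot n$ for the exact parameter $c=1$. Thus it suffices to show that for every attacker lottery $\Delta_a$ supported on weight-$K'$ committees there is a single defending committee $S_d$ with $w(S_d)\le K$ and $\sum_{v}\Pr_{S_a\sim\Delta_a}[S_a\succ_v S_d] < \frac{K'}{K}\,n$. As in Section~\ref{sec:random}, I would prove this by the probabilistic method: exhibit a defending lottery $\Delta_d$ with the per-voter guarantee $\Pr_{S_d\sim\Delta_d,\,S_a\sim\Delta_a}[S_a\succ_v S_d] < \frac{K'}{K}$ for every $v$, and then average over $v$ to extract a deterministic feasible $S_d$.

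I would split into two regimes according to how many weight-$K'$ sets fit into the budget $K$. When $K' > K/2$ (so at most one attacking set fits), take $\Delta_d=\Delta_a$ itself: every committee in its support already has weight $K'\le K$, so feasibility is immediate. For a fixed voter $v$, drawing $S_d$ and $S_a$ independently from the same distribution makes the events $\{S_a\succ_v S_d\}$ and $\{S_d\succ_v S_a\}$ exchangeable and disjoint, hence $\Pr[S_a\succ_v S_d]\le \tfrac12 < \frac{K'}{K}$, where the strict inequality uses $K'>K/2$. Averaging over $v$ then yields a deterministic $S_d$ of weight $K'$ with $\sum_v\Pr_{S_a}[S_a\succ_v S_d]\le n/2 < \frac{K'}{K}n$.

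When $K'\mid K$ (which together with the previous regime covers all of $K\in\{1,2,3\}$), I would mimic the construction behind Theorem~\ref{thm:random} but with no factor-$2$ slack: set $p_i=\min\!\left(1,\frac{\alpha_i}{\beta}\right)$ with $\beta=K'/K$, so that $\sum_i p_i \le 1/\beta = K/K'$. The point of equal weights is that selecting at most $K/K'$ of the (unit-weight) size-$K'$ sets automatically gives a committee of weight at most $\frac{K}{K'}\cdot K' = K$; equivalently, the defender's feasible selections form a uniform matroid of rank $K/K'$ on the ground set of attacking sets. I would therefore round $\{p_i\}$ inside this matroid, rather than with the knapsack rounding of~\cite{byrka}, obtaining a fully integral $\{X_i\}$ with the correct marginals $\E[X_i]=p_i$ and negative correlation, while selecting at most $K/K'$ sets in every realization. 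The per-voter analysis is then the telescoping computation from Section~\ref{sec:random} verbatim, except that $\alpha_i=\beta p_i$ now gives $\sum_i\beta p_i\prod_{j\le i}(1-p_j) < \beta = \frac{K'}{K}$, i.e.\ the exact bound in place of $2\beta$.

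The main obstacle is precisely the feasibility of this rounding. The factor $2$ in Theorem~\ref{thm:random} existed only to leave room ($\sum_i p_i w(S_i)\le K/2$) so that rounding the single fractional variable up could not overflow the budget, whereas here we must consume the entire budget ($\sum_i p_i w(S_i)\le K$) to drive the per-voter bound all the way down to $K'/K$. The equal-weight structure is what rescues feasibility without slack: it converts the knapsack constraint into a uniform-matroid cardinality constraint on the sets, on which one can round with full integrality. I expect the delicate point to be verifying that a matroid-based dependent rounding simultaneously delivers exact marginals and negative correlation—both are needed for the telescoping bound—and observing that the genuinely hard situation, namely $\lfloor K/K'\rfloor\ge 2$ with $K'\nmid K$ (where some realizations would have to select more than $\lfloor K/K'\rfloor$ \emph{overlapping} sets in order to preserve exact marginals), does not arise for $K\le 3$. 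This is the same non-standard-rounding phenomenon flagged in Section~\ref{sec:open}.
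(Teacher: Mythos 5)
Your proposal is correct on the cases it covers, but it does not prove the lemma as stated, and it is considerably heavier than the paper's argument. The lemma asserts existence for \emph{any} $K' \le K$; your two regimes ($K' > K/2$, and $K' \mid K$) leave uncovered, e.g., $K = 5$, $K' = 2$. You are explicit that this residual case does not occur for $K \le 3$, and indeed your argument would suffice for the way the lemma is invoked in Theorem~\ref{thm:small}; but as a proof of Lemma~\ref{lem:same_size} it is incomplete. Within the regimes you do handle, the reasoning is sound: the $K' > K/2$ case via exchangeability of two i.i.d.\ draws is fine, and the $K' \mid K$ case works provided you instantiate the matroid rounding carefully (e.g., dependent rounding with a dummy element padding the rank to $K/K'$, or swap rounding; either gives exact marginals, a hard cardinality bound, and the negative correlation $\E\left[\prod_{i \in T}(1 - X_i)\right] \le \prod_{i \in T}(1 - p_i)$ that your telescoping sum needs).

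The paper's proof is a three-line argument that subsumes both of your cases and closes the gap: let $t = \left\lfloor K/K' \right\rfloor$, draw $S_1, \ldots, S_t$ i.i.d.\ from $\Delta_a$, and set $S_d = \bigcup_{i \in [t]} S_i$, which has weight at most $tK' \le K$. By monotonicity, $S_a \succ_v S_d$ forces $S_a$ to be strictly preferred to every $S_i$; since $S_a, S_1, \ldots, S_t$ are $t+1$ i.i.d.\ draws from $\Delta_a$, exchangeability bounds this probability by $\frac{1}{t+1} < \frac{K'}{K}$, and summing over voters finishes. Your case of $K' > K/2$ is exactly this with $t = 1$. The lesson is that equal weights let you replace dependent rounding altogether by independent sampling plus a symmetry argument: no marginals, no negative correlation, and no divisibility hypothesis are needed. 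Your observation that equal weights turn the knapsack constraint into a cardinality constraint is the right instinct, but the symmetry argument extracts the bound $\frac{1}{t+1}$ directly, whereas the rounding route only reaches $\frac{K'}{K}$ when $K' \mid K$ (otherwise $\sum_i p_i$ can exceed the rank $\left\lfloor K/K' \right\rfloor$ and the budget overflows --- precisely the obstruction you flagged).
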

\begin{proof}
Given $\Delta_a$, we draw $S_1, S_2, \ldots, S_t$ independently from $\Delta_a$ where $t = \left\lfloor\frac{K}{K'}\right\rfloor$. Let $S_d = \bigcup_{i \in [t]} S_i$ so $|S_d| = tK' \leq K$. Now we need to show
\[
\E_{S_a \sim \Delta_a}\left[V(S_d, S_a) - \frac{K'}{K} \cdot n\right] < 0.
\]

For any voter $v \in [n]$, $\Pr_{S_a \sim \Delta_a}[S_a \succ_v S_d]$ is the probability that $S_a$ is strictly most preferred among $S_a, S_1, S_2, \ldots, S_t$, each of which is independently drawn from $\Delta_a$. Thus 
$$\Pr_{S_a \sim \Delta_a}[S_a \succ_v S_d] \le \frac{1}{t + 1} < \frac{K'}{K}.$$ 
Summing over $v \in \mathcal{N}$ gives the desired result.
\end{proof}

\paragraph{Case 2.} From now on, we will assume $p \in (0, 1)$. For convenience, we use $\Delta_1$ and $\Delta_2$ to denote the conditional distributions of $\Delta_a$ when $S_a$ is of weight $1$ and $2$, respectively. That is, for any $S \subseteq \mathcal{C}$,
\begin{align*}
\Pr_{S_1 \sim \Delta_1}[S_1 = S] = \Pr_{S_a \sim \Delta_a}[S_a = S \ | \ |S_a| = 1],\\
\Pr_{S_2 \sim \Delta_2}[S_2 = S] = \Pr_{S_a \sim \Delta_a}[S_a = S \ | \ |S_a| = 2].
\end{align*}
We construct a defending committee using the following procedure:
\begin{itemize}
\item With probability $p^2$, independently draw two committees from $\Delta_1$ and let $S_d$ be their union.
\item Otherwise (with probability $1 - p^2$), independently draw one committee from $\Delta_1$ and one from $\Delta_2$ and let $S_d$ be their union.
\end{itemize}
Denote the distribution of $S_d$ as $\Delta_d$, which is the defending strategy. To prove Theorem~\ref{thm:small}, we need to show
\begin{equation}
\E_{S_a \sim \Delta_a, S_d \sim \Delta_d}\left[V(S_d, S_a) \right] <  \frac{p + 2 (1 - p)}{3} \cdot n.\label{eq:thm_small_goal}
\end{equation}

Fix voter $v$. Consider the committees in decreasing order of voter preference. We say that a committee $S_a$ appears at position $x \in [0,1]$ if the total probability mass in $\Delta_a$ of committees $S \prec_v S_a$ is $x$. For convenience, we assume $x$ is continuous; this will only help the attacking strategy in the proof below. Similarly, let $f(x)$ denote the total probability mass in $\Delta_a$ of $S \prec_v S_a$ with $|S| = 1$, and $g(x)$ denote the total probability mass  in $\Delta_a$ of $S \prec_v S_a$ with $|S| = 2$. Clearly, $0 \leq f(x) \leq p$, $0 \leq g(x) \leq 1-p$, and $f(x) + g(x) \leq x$.

Now, if the attacker chooses $S_a$ at position $x$ (where $x$ is uniformly at random in $[0,1]$), the probability that the defender chooses one $S \prec_v S_a$ conditioned on $|S| = 1$ is $\frac{f(x)}{p}$. Similarly, the probability that the defender chooses one $S \prec_v S_a$ conditioned on $|S| = 2$ is $\frac{g(x)}{1-p}$. Since the defender chooses two committees of unit weight with probability $p^2$, and one committee of weight one and the other of weight $2$ with probability $1-p^2$, we have:


\begin{align}
\Pr_{S_a \sim \Delta_a, S_d \sim \Delta_d} [S_a \succ_v S_d] \leq & p^2 \int_0^1 \left(\frac{f(x)}{p}\right)^2 \mathrm{d}x + (1 - p^2) \int_0^1 \frac{f(x)}{p} \cdot \frac{g(x)}{1 - p} \mathrm{d}x\nonumber\\
\leq &\frac{1}{p} \cdot \int_0^1 \left((1 + p) \cdot x \cdot f(x) - f^2(x)\right) \mathrm{d}x,\label{eq:quadratic_goal}
\end{align}
where the first inequality enumerates the quantile of $S_a$ in $\Delta_a$.

To maximize (\ref{eq:quadratic_goal}), an integral over a quadratic function of $f(x)$ if we fix $p$ and $x$, we should have $f(x) = \min\left(p, \ \frac{(1 + p)x}{2}\right)$. Thus,
\begin{align*}
\Pr_{S_a \sim \Delta_a, S_d \sim \Delta_d} [S_a \succ_v S_d] \leq &\frac{1}{p} \cdot \left(\int_0^{\frac{2p}{1+p}} \frac{(1 + p)^2 \cdot x^2}{4} \mathrm{d}x + \int_{\frac{2p}{1+p}}^1 \left((1 + p) \cdot x \cdot p - p^2\right)\mathrm{d}x\right)\\
= &\frac{1}{p} \cdot \left(\frac{p^2}{3} + \frac{p(1 - p)}{2}\right)\\
= &\frac{1}{2} - \frac{p}{6} < \frac{2 - p}{3},
\end{align*}
where the final inequality follows since $p < 1$. Summing over the voters gives (\ref{eq:thm_small_goal}), and hence proves Theorem~\ref{thm:small}.

\end{document}